\newcommand{\nc}[1]{\newcommand{#1}}
\nc{\Z}{\ensuremath{\mathbb{Z}}}
\nc{\nat}{\ensuremath{\mathbb{N}}}
\nc{\natpos}{\ensuremath{\nat_{\geqslant 1}}}
\nc{\finsubseteq}{\ensuremath{\subseteq_{\text{fin}}}}
\nc{\finsubset}{\ensuremath{\subset_{\text{fin}}}}
\nc{\contResp}[1]{\ensuremath{\subseteq_{#1}}}
\nc{\ContResp}[1]{\ensuremath{\supseteq_{#1}}}
\nc{\propContResp}[1]{\ensuremath{\subset_{#1}}}
\nc{\dc}{\ensuremath{\chi}} 
\nc{\set}[1]{\ensuremath{\{#1\}}}
\nc{\inftyset}{\ensuremath{\set{\infty}}}
\nc{\setc}[2]{\set{#1 \ : \ #2}}
\nc{\deff}{\ensuremath{:=}}
\let\P\relax
\DeclareMathOperator{\P}{\mathcal{P}}
\DeclareMathOperator{\Pfin}{\mathcal{P}_{\text{fin}}}
\DeclareMathOperator{\Pfinplus}{\mathcal{P}^{+}_{\text{fin}}}
\nc{\soi}{soi}
\nc{\swgqueries}{swg-queries\xspace}
\nc{\swgquery}{swg-query\xspace}
\nc{\swggqueries}{swgg-queries\xspace}
\nc{\swggquery}{swgg-query\xspace}
\nc{\dswgqueries}{dswg-queries\xspace}
\nc{\dswgquery}{dswg-query\xspace}
\nc{\query}{\ensuremath{q}}
\nc{\ws}{\ensuremath{w}} % global window size
\nc{\wc}{\ensuremath{c}} % local window size constraints
\nc{\wC}{\ensuremath{C}} % general window size constraints
\nc{\WC}{\ensuremath{\tilde{c}}} % local or general window size constraints
\nc{\types}{\textit{types}} % typeset of query
\nc{\typesets}{\textit{typesets}} % typesets of dswg query
\nc{\vars}{\textit{vars}} % variableset of query
\nc{\TS}{\ensuremath{\Gamma}} % TypesSet
\nc{\TSalt}{\ensuremath{\Delta}} % another TypesSet
\nc{\VS}{\ensuremath{\textsf{\upshape Vars}}} % VariabelsSet
\nc{\Vars}{\VS} % VariabelsSet 
\nc{\type}{\ensuremath{\gamma}}
\DeclareMathOperator{\ta}{\mathsf{a}}
\DeclareMathOperator{\tb}{\mathsf{b}}
\DeclareMathOperator{\tc}{\mathsf{c}}
\DeclareMathOperator{\td}{\mathsf{d}}
\nc{\tSUB}{\mathsf{SUB}}
\nc{\tSCH}{\mathsf{SCH}}
\nc{\tCHE}{\mathsf{CHE}}
\nc{\tRES}{\mathsf{RES}}
\nc{\tSUS}{\mathsf{SUS}}
\nc{\tEVI}{\mathsf{EVI}}
\nc{\tFAI}{\mathsf{FAI}}
\nc{\tFIN}{\mathsf{FIN}}
\nc{\tKIL}{\mathsf{KIL}}
\nc{\tLOS}{\mathsf{LOS}}
\nc{\tUPD}{\mathsf{UPD}}
\nc{\nS}{\mathsf{nTypes}}
\nc{\base}{\mathsf{base}}
\DeclareMathOperator{\subseq}{\preccurlyeq}
\nc{\Mods}[2]{\ensuremath{\mathsf{Mod}_{#1}(#2)}}
\nc{\mods}[1]{\Mods{\TS}{#1}}
\nc{\Mintrace}[2]{\ensuremath{\mathsf{Min}_{#1}(#2)}}
\nc{\mintrace}[1]{\Mintrace{\TS}{#1}}
\nc{\Hom}{\ensuremath{\stackrel{\textup{hom}}{\longrightarrow}}}
\nc{\trace}{\ensuremath{t}}
\nc{\Sample}{\ensuremath{\mathcal{S}}}
\nc{\supp}{\mathsf{sp}} % rational number demanding the support
\nc{\Supp}{\mathsf{supp}} % support of a query in a sample
\nc{\matchProb}{\mathsf{Match}}
\nc{\compDescProb}{\mathsf{CompDescQuery}}
\nc{\checkDescProb}{\mathsf{CheckDescQuery}}
\DeclareMathOperator{\npclass}{\mathsf{NP}}
\nc{\Pos}[2]{\ensuremath{\textit{pos}{(#1,#2)}}}
\nc{\replace}[3]{\ensuremath{#1\langle #2 \mapsto #3\rangle}}
\nc{\mgq}{\ensuremath{q_{\textit{mg}}}} % most general query
\nc{\mgs}{\ensuremath{s_{\textit{mg}}}} % most general query string
\nc{\NR}{\ensuremath{U}} % unvisited positions of s
\nc{\AV}{\ensuremath{V}} % available variables
\nc{\Error}{\ensuremath{\bot}}
\nc{\QComputeDescrQueryFromq}{\textsf{\textup{DescrQuery($\Sample$,$\supp$,$(\ell,\ws,\WC,k)$)}}}
\nc{\QComputeDescrQueryFromqShort}{\textsf{\textup{DescrQuery}}}
\nc{\TypeRepl}{\textsf{TypeRep}}
\nc{\VarRepl}{\textsf{VarRep}}
\nc{\NoChange}{\textsf{NoChange}}
\nc{\dis}{\mathsf{dis}}
\nc{\gen}{\mathsf{gen}}
\nc{\cand}{\mathsf{cand}}
\nc{\TSaltCalc}{\textsf{\textup{$\TSalt$-Calculation($\Sample$,$\supp$,$\dis$)}}}
\nc{\TSaltCalcShort}{\textsf{\textup{$\TSalt$-Calculation}}}
\newtheorem{theorem}{Theorem}{\bfseries}{\itshape}
\newtheorem{Lemma}[theorem]{Lemma}{\bfseries}{\itshape}
\newtheorem{Definition}[theorem]{Definition}{\bfseries}{\itshape}
\newtheorem{Example}[theorem]{Example}{\bfseries}{\itshape}
{\bfseries}{\itshape}
\newtheorem{Proposition}[theorem]{Proposition}{\bfseries}{\itshape}
\newtheorem{Claim}[theorem]{Claim}{\bfseries}{\itshape}
\newtheorem{Corollary}[theorem]{Corollary}{\bfseries}{\itshape}
\newtheorem{Remark}[theorem]{Remark}{\bfseries}{\itshape}
\title{Puzzling over Subsequence-Query Extensions: Disjunction and Generalised Gaps\footnote{This article is the full version of a contribution accepted at the 15th Alberto Mendelzon International Workshop on Foundations of Data Management (2023). Both authors contributed equally.}}
\begin{document}

\author{André Frochaux\,\orcidlink{0009-0001-7918-8725}\ and Sarah Kleest-Meißner\,\orcidlink{0000-0002-4133-7975} \\Humboldt-Universit\"at zu Berlin, Germany,\\ \texttt{\{andre.frochaux|kleemeis\}@informatik.hu-berlin.de}}

\date{}
\maketitle

\begin{abstract}
  \noindent A query model for sequence data was introduced in \cite{KSSSW22} 
    in the form of subsequence-queries with wildcards and gap-size 
    constraints (\swgqueries, for short).
    These queries consist of a pattern over an alphabet of variables and types, 
    as well as a global window size and a number of local gap-size constraints.
    %\par 
    We propose two new extensions of \swgqueries, which both enrich the 
    expressive power of \swgqueries{} in different ways: 
    subsequence-queries with generalised gap-size constraints 
    (\swggqueries, for short) and disjunctive subsequence-queries 
    (\dswgqueries, for short). 
    We discuss a suitable characterisation of containment, a classical property
    considered in database theory, and adapt results concerning
    the discovery of \swgqueries{} to both, \swggqueries{} and \dswgqueries.
\end{abstract}

\section{Introduction}

Applications in different domains like 
cluster monitoring \cite{VPKOTW2015}, 
urban transportation \cite{ArtikisWSBLPBMKMGMGK14}, and 
in finance\cite{TeymourianRP12}, 
use models for sequence data, which define an order for a set of data items 
\cite{BabcockBDMW02}.
Respective systems enable the definition of queries which detect patterns of
data items describing a \emph{situation of interest} (\emph{\soi} for short), 
for example error occurence, in a specific order and temporal context. 
\par 

Finding a suitable query is a non-trivial task. A user may know the time at 
which a certain job fails execution, but does not exactly conceive a situation
which forcasts the failure. 
It was therefore suggested to automatically discover a query from historic 
sequence data which describes the \soi. Such a query may then be used in 
pro-active applications where they shall anticipate a \soi{} to prepare for 
it accordingly \cite{ArtikisBBWEFGHLPSS14}.
\par 

In \cite{KSSSW22} a formal model (referred to as \emph{\swgqueries}) was 
proposed, which covers the essence of discovering a query from sequence data. 
In a nutshell, an \swgquery{} consists of a pattern over an alphabet of 
variables and types, a global window size and a tuple of local gap-size 
constraints.
Syntactically, \swgqueries{} are so-called Angluin-style patterns with 
variables, but with a semantics adapted to sequence data: each variable in the 
query string ranges only over a single symbol and the query matches if, after 
replacing the variables by single data items, it occurs as a subsequence that 
satisfies the window size and local gap-size constraints.
Angluin-style patterns were introduced in \cite{Angluin80} and play a central 
role for inductive inference, in formal language theory and combinatorics on 
words (see \cite{ShinoharaArikawa1995}, \cite{ManeaSchmid2019}, 
\cite{MateescuSalomaa1997}).
%\par
%
Concepts and algorithms from inductive inference of so-called pattern languages,
that can be described by Angluin-style patterns, can be adapted to \swgqueries.
Especially the notion of \emph{descriptive patterns} (already introduced in 
\cite{Angluin80}, see also \cite{FreydenbergerReidenbach2010}, 
\cite{FreydenbergerReidenbach2013}) forms a key concept and enables the 
adaptation of \emph{Shinohara's algorithm} \cite{Shinohara82} for Angluin-style 
patterns. This algorithm computes a descriptive Angluin-style pattern upon input
of a  
finite set of sequences of data items.
The corresponding adaptation to \swgqueries{} including some extensions were 
presented in \cite{KSSSW22}, and liftetd to a multi-dimensional data model in 
\cite{KSSSW23}.
\par 

Subsequences in general have extensively been studied both in a 
purely combinatorical sense (in formal language theory, logic and combinatorics 
on words) and algorithmically (in string algorithms and bioinformatics); 
see the introductions of the recent papers \cite{GawrychowskiEtAl2021}, 
\cite{DayEtAl2021} for a comprehensive list of relevant pointers. 
The problem of matching subsequences with gap-constraints (and analysis problems
with respect to the set of all gap-constrained subsequences of given strings) 
has been investigated in the recent papers \cite{DayEtAl2022}, 
\cite{KoscheEtAl2022} (see also \cite{KoscheEtAl2022b} for a survey). 
\par 

Queries defined for complex event recognition (\emph{CER}, for short) usually 
use operators such as sequencing, conjunction and disjunction, Kleene closure, 
negation and variables which may be bound to data items in a stream 
\cite{GiatrakosAADG20}.
Inspired by the generalised gap-size constraints described in \cite{DayEtAl2022}
that are defined over strings other than patterns over variables and types, 
and the use of disjunction in CER languages, we introduce two new notions of 
subsequence-queries, which both extend the expressive power of \swgqueries{} in 
different ways:
\begin{itemize}
    \item \emph{disjunctive subsequence-queries with wildcards and local 
        gap-size constraints}, for short: \emph{\dswgqueries}, and
    \item \emph{subsequence-queries with wildcards and generalised gap-size 
        constraints}, for short: \emph{\swggqueries}
\end{itemize}

\noindent Improving the expressive power of the underlying language used for an 
automatically discovered decriptive query leads to results of increased 
precision. Enabling disjunction is a natural and effective way to reach this. 
Our second approach of generalised gap-size constraints allows detecting
temporal contexts not only between consecutive data items, but between any 
data items in the query string.

The remainder of this paper is structured as follows. 
Section~\ref{sec:traces-and-queries} introduces both, \swggqueries{} and 
\dswgqueries, and discusses the relation to \swgqueries. 
In Section~\ref{sec:discovery-short}, we provide a solution for the query 
discovery problem for both kinds of queries.
Section~\ref{sec:conclusion} concludes the paper. 
Due to space limitations of the conference version, proof details had to be deferred to the appendix.

\section{Traces and Queries}\label{sec:traces-and-queries}

This section introduces the syntax and semantics of both, 
\swggquery (Section \ref{subsec:swggqueries}), and
\dswgqueries (Section \ref{subsec:dswgqueries}).
For a better understanding we consider each extension individually.

By $\Z$, $\nat$, $\natpos$ we denote the set of integers, non-negative
integers, and positive integers, respectively. For every set $M$ we denote the powerset by $\P(M)$, i.e. the set of all subsets of $M$, and $\Pfin(M) \deff \setc{X \in \P(M)}{X \text{ is finite}}$ is the set of all finite subsets from $M$. Moreover, we write $\Pfinplus(M)$ for $(\Pfin(M)\setminus \varnothing)$ and for every $k \in \natpos$, we define $\P_k(M)\deff \set{ m \in \Pfin(M)\, | \, |m|=k}$. For $\ell\in\nat$ we
let $[\ell]=\setc{i\in\natpos}{1\leq i\leq \ell}$. 
For a non-empty set $A$ we write $A^*$ (and $A^+$) for the set of all (non-empty) strings built from symbols in $A$.
By $|s|$ we denote the length of a string $s$, and for a position
$i\in[|s|]$ we write $s[i]$ to denote the letter at position $i$ in $s$.
A \emph{factor} of a string $s\in A^*$ is a string $t\in A^*$ such that $s$ is of the form $s_1 t s_2$ for $s_1,s_2\in A^*$.
\par
An \emph{embedding} is a mapping $e : [\ell] \to [n]$ with $\ell \leq n$ such that $i < j$ implies $e(i) < e(j)$ for all $i, j \in [\ell]$. Let $s$ and $t$ be two strings with $|s| \leq |t|$. We say that $s$ is a \emph{subsequence of $t$ with embedding} $e : [|s|] \rightarrow [|t|]$, if $e$ is an embedding and $s[i] = t[e(i)]$ for every $i \in [|s|]$. We write $s \subseq_e t$ 
to indicate that $s$ is a subsequence of $t$ with embedding $e$; and we write $s \subseq t$ to indicate that there exists an embedding $e$ such that $s \subseq_e t$.\par

We model traces as finite, non-empty strings over some (finite or
infinite) alphabet $\TS$ of \emph{types}.
It will be reasonable to assume that $|\TS|\geq 2$.
A \emph{trace} (over $\TS$) is a string $\trace\in\TS^+$. We write
$\types(\trace)$ for the set of types that occur in $\trace$.
Finally, we fix a countably infinite set $\VS$ of \emph{variables}, and we will always
assume that $\VS$ is disjoint with the set $\TS$ of considered types.

\subsection{Syntax and semantics of \swggqueries}\label{subsec:swggqueries}
\begin{Definition}\label{def:swggquery}
    An \swggquery $\query=(s,\ws,\wC)$ (over $\VS$ and $\Gamma$) is specified by
    \vspace*{-0.5em}
    \begin{itemize}
	\item a query string $s \in (\VS \cup \Gamma)^+$, 
	\item a global window size $\ws \in \natpos \cup \inftyset$ with 
        $w \geqslant |s|$ and
	\item a finite set $\wC$ of generalised gap-size constraints 
        (for $|s|$ and $\ws$) of form 
        \vspace*{-0.25em}
        \[
            (c^-,c^+,r)_{j} \ \in \ (\nat \times \nat \cup \set{\infty} \times \natpos)_{\natpos} 
        \]
        \vspace*{-0.25em}
        for $j \in [|s|-1]$, $r \leq c^- \leq c^+$ and $j + r \leq |s|$. 
    \end{itemize} 
    \vspace*{-0.70em}
\end{Definition}

\noindent The semantics of \swggqueries is defined as follows: each variable in $s$ 
represents an arbitrary type from $\TS$. A query $\query=(s,\ws,\wC)$ 
\emph{matches in a trace $\trace$} (in symbols: $\trace\models\query$), if the 
wildcards in s can be replaced by types in $\TS$ in such a way that the resulting
string $s'$ satifies the fowllowing: 
$\trace$ contains a factor $\trace'$ of length at most $\ws$ such that $s'$ 
occurs as a subsequence in $\trace'$ and for each $(c^-,c^+,r)_j\in\wC$ the gap  
between $s[j]$ and $s[j+r]$ in $\trace'$ has length at least $c^-$ and at most 
$c^+$. Gaps of range $r$ for $r\in [w]$ without any constraints are implicitly 
set to the most general constraint $(0,\infty,r)$.
\par 

A more formal description of these semantics relies on the following additional
notation: An embedding $e: [\ell] \to [n]$ \emph{satisfies a global window size}
$\ws$, if $e(\ell) - e(1) +1 \leq \ws$.
Furthermore, it \emph{satisfies a set of generalised gap-size constraints} 
$\wC$ (for $|\ell|$ and $\ws$) if
$c^- \leq e(j+r)-1-e(j) \leq c^+$, for each $(c^-,c^+,r)_{j}\in\wC$. 
\par

A \emph{substitution} is a mapping $\mu:(\VS\cup\TS) \to (\VS\cup\TS)$ with 
$\mu(\type)=\type$ for all $\type\in\TS$. We lift substitutions to mappings
$(\VS\cup\TS)^+ \to (\VS\cup\TS)^+$ in the obvious way, i.e. 
    $\mu(s) = \mu(s[1]) \mu(s[2]) \ldots \mu(s[\ell])$
for $s\in(\VS\cup\TS)^+$ and $\ell\deff|s|$.
\par

An \swggquery $\query=(s,\ws,\wC)$ \emph{matches in a trace $\trace\in\TS^+$} 
(or $\trace$ \emph{matches} $\query$), if and only if there are a substitution
$\mu:(\VS\cup\TS) \to \TS$ and an embedding $e: [\ell] \to [n]$ that satisfies
$\ws$ and $\wC$, such that $\mu(s)\subseq_e \trace$. We call $(\mu, e)$ a
\emph{witness} for $\trace\models\query$.

The \emph{model set} of a query $\query$ w.r.t. to a type set 
$\TSalt\subseteq\TS$ is 
    $\Mods{\Delta}{\query}\deff\{ \trace\in\Delta^+\,:\,\trace\models\query\}$.
Note that there exist \swggqueries $\query$ such that 
    $\mods{\query}=\varnothing$.
They have in common that either their generalised gap-size constraints 
conflict with the global window size or some gap-size constraints are in 
conflict among themselves. Lemma~\ref{cond:swggquery-compatibility}
characterises \swggqueries{} with compatible constraints.

\begin{Example}\label{exp:swggquery-unsatisfiable}
    Let $\TS=\{\ta,\tb\}$. Let $\query_1=(s,\ws,\wC_1)$ and 
    $\query_2=(s,\ws,\wC_2)$ be the two \swggqueries over $\VS$, where 
    $s = \ta \ta \ta \ta \ta \ta$ and $w=10$ for both queries, 
    and $C_1 = ( (7,7,3)_1, (6,6,3)_2, (0,0,1)_5 )$ and 
    $C_2 = ( (4,4,5)_1, (2,5,2)_3 )$.
    \par 
    \vspace*{0.5em}
    \hspace*{-2.1em}
    \begin{minipage}{0.45\textwidth}
        \footnotesize
        \addtolength{\tabcolsep}{-1pt}
        \begin{tabular}{ccccccccccccccc}
            &&&$1$& &$2$& &$3$& &$4$& &$5$& &$6$&\\
            $s$ &=& &$\ta$& &$\ta$& &$\ta$& &$\ta$& &$\ta$& &$\ta$&\\
            \multicolumn{4}{c}{} &
            \multicolumn{5}{c}{\upbracefill} &
            \multicolumn{3}{c}{} &
            \multicolumn{1}{c}{\upbracefill} &\\
            $\wC_1$ &:&
            \multicolumn{2}{c}{} &
            \multicolumn{5}{c}{$(7,7,3)_1$} &
            \multicolumn{3}{c}{} &
            \multicolumn{1}{c}{$(0,0,1)_5$} &\\
            \multicolumn{6}{c}{} & 
            \multicolumn{5}{c}{\upbracefill} &\\
            \multicolumn{6}{c}{} & 
            \multicolumn{5}{c}{$(6,6,3)_2$} &\\
            \multicolumn{3}{c}{} &
            \multicolumn{11}{c}{\upbracefill} &\\
            $\ws$ &:&
            \multicolumn{1}{c}{} &
            \multicolumn{11}{c}{10} & \\
        \end{tabular}
    \end{minipage}
    \hspace*{3em}
    \begin{minipage}{0.45\textwidth}
        \footnotesize
        \addtolength{\tabcolsep}{-1pt}
        \begin{tabular}{ccccccccccccccc}
            &&&$1$& &$2$& &$3$& &$4$& &$5$& &$6$&\\
            $s$ &=& &$\ta$& &$\ta$& &$\ta$& &$\ta$& &$\ta$& &$\ta$&\\
            \multicolumn{8}{c}{} &
            \multicolumn{3}{c}{\upbracefill} & \\
            $\wC_2$ &:&
            \multicolumn{6}{c}{} &
            \multicolumn{3}{c}{$(2,5,2)_3$} & \\
            \multicolumn{4}{c}{} & 
            \multicolumn{9}{c}{\upbracefill} &\\
            \multicolumn{4}{c}{} & 
            \multicolumn{9}{c}{$(4,4,5)_1$} &\\
            \multicolumn{3}{c}{} &
            \multicolumn{11}{c}{\upbracefill} &\\
            $\ws$ &:&
            \multicolumn{1}{c}{} &
            \multicolumn{11}{c}{10} & \\
        \end{tabular}
    \end{minipage}
    \phantom{text}\\
    A shortest trace over $\TS$ which satisfies $\wC_1$ is
        $\trace = \ta\ \tb\ \ta\ \tb\ \tb\ \tb\ \tb\ \ta\ \ta\ \ta\ \ta$.
    But $\trace\not\models\query_1$ since $\trace$ does not satisfy 
    $\ws=10$. Since $\trace$ is a shortest trace there exists no 
    trace satisfying both, $\ws$ and $\wC$.
    (The shortest trace is not unique since the sequences of $\tb$s could be 
    replaced by arbitrary types from $\TS$.)
    \par 

    Note that $\mods{\query_2}=\varnothing$ holds as well since $(4,4,5)_1$ and
    $(2,5,3)_3$ are incompatible: 
    For each trace $\trace$, substitution $\mu$ and embedding $e$ such that 
    $\mu(s) \subseq_e \trace$ and $e$ satisfies $(4,4,5)_1$, the second gap-size
    constraint is not satisfied, since it demands at least one further type 
    between $e(3)$ and $e(4)$, contradicting $(4,4,5)_1$.
\end{Example}

\newcounter{Counter_Thm:Lemma_cond:swggquery}
\setcounter{Counter_Thm:Lemma_cond:swggquery}{\value{theorem}}

\begin{Lemma}\label{cond:swggquery-compatibility}\label{label:swggquery-satisfiablity}
    An \swggquery $\query=(s,\ws,\wC)$ (over $\VS$ and $\Gamma$) is satisfiable, i.e. $\mods{\query} \neq \varnothing$, iff  there are no two sequences 
        \begin{align*}  
            C'  &=\left( 
                (c_1'^-,c_1'^+,r'_1)_{j'_1 }, \ 
                (c_2'^-,c_2'^+,r'_2)_{j'_2 }\ ,
	            \ldots, 
                (c_{|C'|}'^-,c_{|C'|}'^+,r'_{|C'|})_{j'_{|C'|}}\right) 
	    \end{align*}
	    and
        \begin{align*} 
            C'' &=\left( (c_1''^-,c_1''^+,r''_1)_{j''_1 }, \ 
                (c_2''^-,c_2''^+,r''_2)_{j''_2 }\ ,
                \ldots,  
                (c_{|C''|}''^-,c_{|C''|}''^+,r''_{|C''|})_{j''_{|C''|}}\right) 
	    \end{align*}
	    from $\wC \cup \set{ (0,\infty,1)_1, \ldots, (0,\infty,1)_{|s|-1} }$ 
        where 
        $j'_1 = j''_1$,  
        $j'_{|C'|}+r'_{|C'|} = j''_{|C''|}+r''_{|C''|}$, and 
        $j'_{i+1} = j'_i+r'_i$ and $j''_{i+1} = j''_i+r''_i$ 
        for all $i\in[|C'|-1]$, 
        with
	    \begin{quote}
	        \begin{enumerate}[(i)]
                \item $|s| + \sum\limits_{i=1}^{|C'|} c'^-_i-r_i'+1 \quad > \quad \ws$ \qquad, or
                \item $ \sum\limits_{i=1}^{|C'|} c'^-_i-r_i'+1 \quad > \quad \sum\limits_{i=1}^{|C''|} c''^+_i-r_i''+1$.  
	        \end{enumerate}	
        \end{quote} 
        \vspace*{-1em}
\end{Lemma}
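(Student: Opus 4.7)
The plan is to prove both implications. For the ``only if'' direction, suppose a witness $(\mu,e)$ exists for $\trace\models\query$. Telescoping the lower-bound inequality $e(j'_i + r'_i) \geq e(j'_i) + c'^-_i + 1$ along a hypothetical chain $C'$ (and using the implicit unit gaps $(0,\infty,1)_j$ to extend across the flanks $[1, j'_1]$ and $[j'_{|C'|} + r'_{|C'|}, |s|]$) gives
\[
e(|s|) - e(1) \;\geq\; |s| - 1 + \sum_{i=1}^{|C'|} (c'^-_i - r'_i + 1);
\]
combined with the window bound $e(|s|) - e(1) \leq \ws - 1$, this contradicts~(i). For~(ii), the same lower-bound telescoping on $C'$ together with the analogous upper-bound telescoping on $C''$ over their shared range yield $\sum_i (c'^-_i + 1) \leq \sum_i (c''^+_i + 1)$; since $\sum_i r'_i = \sum_i r''_i$ (both equal the common span $j'_{|C'|}+r'_{|C'|} - j'_1$), this is equivalent to $\sum_i (c'^-_i - r'_i + 1) \leq \sum_i (c''^+_i - r''_i + 1)$, contradicting~(ii).

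For the ``if'' direction, I reduce the construction of a witness to the feasibility of a difference-constraint system. If one can find integers $1 = e(1) < e(2) < \dots < e(|s|) \leq \ws$ satisfying every gap constraint in $\wC$, then any constant substitution $\mu: \VS \to \TS$ (mapping all variables to a fixed type $\tau \in \TS$) together with the trace that places $\mu(s[j])$ at position $e(j)$ and arbitrary types at the remaining positions yields $\trace \in \mods{\query}$. The system
\[
c^- + 1 \;\leq\; e(j+r) - e(j) \;\leq\; c^+ + 1,\quad e(j+1) - e(j) \geq 1,\quad e(|s|) - e(1) \leq \ws - 1
\]
is feasible over the integers iff the associated weighted digraph $G$ on $[|s|]$---with backward lower-bound edges $j+r \to j$ of weight $-(c^- + 1)$, backward unit edges $j+1 \to j$ of weight $-1$, forward upper-bound edges $j \to j+r$ of weight $c^+ + 1$, and the forward window edge $1 \to |s|$ of weight $\ws - 1$---contains no negative cycle (Bellman--Ford).

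The crux is translating a negative cycle in $G$ back into chain conditions~(i) or~(ii). A minimal negative cycle consists of one forward run followed by one backward run, spanning the same index range. If the forward run is the window edge, the backward run (together with implicit unit gaps so that it spans $[1, |s|]$) is exactly a chain $C'$ witnessing~(i). If the forward run is instead a tiling of upper-bound edges, the forward and backward runs together form a pair $(C'', C')$ with matching endpoints whose negative cycle weight is precisely condition~(ii). Non-minimal cycles with several alternating forward and backward runs are reduced to the minimal case by shortcutting across an interior extremum of the index profile; each shortcut splits the cycle into two smaller ones whose weights sum to the original, so at least one sub-cycle remains negative, and induction on the number of peaks finishes the argument. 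The main obstacle will be making this shortcut argument fully rigorous across all cycle shapes, as the directed structure of $G$ permits various alternations of forward and backward runs.
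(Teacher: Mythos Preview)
Your proposal is considerably more detailed than the paper's own argument, which is explicitly labelled a sketch. The paper essentially only treats the direction ``bad chains $\Rightarrow$ unsatisfiable'' (your ``only if''), by interpreting $\sum(c'^-_i - r'_i + 1)$ and $\sum(c''^+_i - r''_i + 1)$ as the minimum respectively maximum total gap length along the chain; for the converse it merely notes that a query whose constraint set is contained in $\{(0,\infty,1)_j : j\in[|s|-1]\}$ trivially has a model, and does not construct a witness in the general case. Your reduction of the ``if'' direction to feasibility of a difference-constraint system and the Bellman--Ford negative-cycle criterion is the standard and correct framework here, and it supplies precisely the construction the paper omits.

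The one genuine gap is the one you flag yourself: the claim that a minimal negative cycle in $G$ consists of a single forward run followed by a single backward run is not automatic. Because upper- and lower-bound edges with range $r>1$ jump over intermediate indices, a \emph{simple} cycle can have several alternating peaks and valleys without repeating any vertex, so ``minimal'' in the sense of simplicity does not immediately give monotonicity. Your proposed shortcut at an interior extremum does not by itself split the cycle into two closed walks, since the two arcs meeting at that extremum need not share any other vertex. A workable repair is to exploit the unit backward edges $(j{+}1)\to j$, which are always present: at an interior peak $p$ the cycle must later cross index level $p$ again via some forward edge $a\to b$ with $a<p<b$; inserting the unit backward path from $p$ down to $a$ (and its reverse along the lower-bound edge accompanying that forward edge) lets you peel off a smaller cycle containing the peak while leaving a residual closed walk with fewer direction changes. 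Carrying this out carefully, with the weight bookkeeping, would complete your argument.
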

For the rest of this paper, we only focus on queries with a non-empty model set.

\subsection{Syntax and semantics of \dswgqueries}\label{subsec:dswgqueries} 
\begin{Definition}\label{def:dswgquery}
    A \dswgquery $\query=(s,\ws,c)$ (over $\VS$ and $\Gamma$) is specified by
	\vspace*{-0.5em}
    \begin{itemize}
	\item a query string $s = s[1] \ldots s[\ell]\text{, whereby } s[i] = \begin{cases}
		x\in\VS &\\
		\dc \in \Pfinplus(\TS)
	\end{cases}$, 
	\item a global window size $\ws \in \natpos \cup \inftyset$ with 
        $w \geqslant |s|$ and
	\item a tuple $\wc = (\wc_1, \wc_2, \ldots, \wc_{|s|{-}1})$ 
of local gap-size constraints
(for $|s|$ and $w$), where $\wc_i =
(\wc^-_i, \wc^+_i) \in \mathbb{N} \times (\mathbb{N} \cup
\{\infty\})$, such that 
$\wc^-_i \leq \wc^+_i$ for every $i \in [|s|{-}1]$ and
$|s| {+} \sum^{|s|}_{i = 1}\wc^-_i\leq \ws$.
    \end{itemize} 
\end{Definition}
Note that setting all gap-size constraints of a \dswgquery $\query$ to 
$(0,\infty)$ corresponds to a query without gap-size constraints.
\par

The semantics of \dswgqueries is defined as follows: Again, variables in $s$ 
represent an arbitrary type, and each set $\dc$ stands for a disjunction.
Intuitively, a trace $\trace$ matches a query $\query=(s,\ws,\wc)$ if the 
variables in $s$ can be replaced by types and each occuring of an $\dc$ can 
be mapped to a single type $\type\in\dc$, such that the resulting string $s'$ 
occurs as a subsequence in $\trace$ that spans at most $\ws$ types and the 
gap between $s'[i]$ and $s'[i{+}1]$ in $\trace$ has length between $\wc_i^-$ 
and $\wc_i^+$, for all $i<\ell\deff |s|$.
\par

An alternative description of these semantics, which will be more
convenient for our formal proofs, involves a bit more notation:
We say that an embedding \emph{$e : [\ell] \to [n]$ satisfies} a
global window size $\ws$, if $e(\ell)-e(1)+1 \leq \ws$;
and we say that \emph{$e$ satisfies} a tuple $\wc = (\wc_1, \wc_2,
\ldots, \wc_{\ell{-}1})$ of local gap-size constraints (for $\ell$ and
$\ws$), if $\wc^-_i \ \leq \ e(i{+}1){-}1-e(i) \ \leq \ \wc^+_i$ \ for
all $i<\ell$.
\par

A \emph{substitution of size $\ell$}
is a mapping $\mu_\ell : ([\ell] \times \VS \cup \Pfinplus(\TS)) \to (\VS \cup \Pfinplus(\TS))$ with:
\[ \mu_\ell(i, z) = \begin{cases} x \in(\VS \cup \Pfinplus(\TS)) &, i=1 \text{ and } z \in \VS \\
								\mu_\ell(1,z) &, i>1 \text{ and } X \in \VS \\
								z' &, \varnothing \subset z' \subseteq z \text{ for all non-empty } z \finsubseteq \TS. 	 \end{cases}\]

\noindent We  extend substitutions of size $\ell$ to mappings 
	$([\ell] \times \VS \cup \Pfinplus(\TS))^+ \to (\VS \cup \Pfinplus(\TS))^+$ 
for strings $s \in (\VS \cup \Pfinplus(\TS))^+$ of size $\ell$
in the obvious way, i.e., 
$\mu(s) = \mu_\ell(1,s[1]) \mu_\ell(2,s[2]) \cdots \mu_\ell(\ell,s[\ell])$. Since the size of the string must match the size of the substitution, we can omit the index $\ell$ if we apply it to a string.
Particularly, we can omit the parameter $i$ for the position if the second parameter is a variable, as we have for all variables $z$ that $\mu(i,z) = \mu(i',z)$ for all $i,i' \in [l]$, or if the position is given by the context, i.e. we write $\mu(s[i])$ instead of $\mu(i,s[i])$.

A \dswgquery $\query=(s,\ws,\wc)$ \emph{matches in a trace $\trace\in\TS^+$}
(or, $\trace$ \emph{matches} $\query$, in symbols: $\trace\models\query$),
if and only if there are 
a substitution $\mu : ([\ell] \times\VS \cup \Pfinplus(\TSalt)) \to \TS$ 
(i.e., there are only singeltons in the 
co-domain of $\mu$ and every type of 
$\mu(s)$ is the unique element of its singleton) 
and an embedding $e : [|s|] \to [|t|]$ that satisfies $\ws$ and $\wc$,
such that $\mu(s) \subseq_e t$. We call $(\mu,e)$ \emph{a witness for $\trace\models\query$}.

\begin{Example}
	Let $x_1, x_2, x_3\in\VS$ and $\TS = \{\ta, \tb, \tc\}$. We consider a query $\query = (s, \ws,
	\wc)$, where $s = x_1 \{\ta,\tb\} x_1 x_2 \{\tc\} x_3 \{\ta,\tc\} x_1$, $w = 25$ and
	$\wc = ((0, 1), (2, \infty),$ $(3, \infty), (0, 5),
	(0, 5)$, $(1, 5), (1, 2))$. For $t_1,t_2 \in \Gamma^*$ we consider the trace 
	$\trace = \trace_1 \tc \ta \tb \tb \tc \ta \tb
	\ta \tc \ta \tb \ta \tc \tb \tc \tb \tb \ta \tc \trace_2$.
	We observe that $\trace \models \query$, and a
	witness substitution and embedding can be illustrated as follows: 
	\footnotesize
	\begin{align*}
		&s& &=& && &x_1& &\{\ta,\tb\}& && &x_1& &&           &x_2& &\{\tc\}& &x_3& && &\{\ta,\tb\}& && &x_1\,,& &&\\
		&\trace& &=& &\trace_1& &\tc&  &\hspace*{0.75em}\ta& &\tb \tb&      &\tc& &\ta \tc \ta& &\ta&   &\hspace*{0.35em}\tc& &\tb&  &\tc&     &\hspace*{0.75em}\tb& &\tb &
		&\tc& &\trace_2\ .&
	\end{align*}
\end{Example}

\noindent We close this subsection with two little observations. First, it is reasonable to assume that $\dc$ is a proper subset of \TS, otherwise we could also use a wildcard instead of a disjunction. Second, in the case of a finite alphabet $\Gamma$, we obtain the possibility to express a simple kind of negation. We can build a query where $s$ contains a substring $s'= \ta \dc \tb $ with $\dc=\Gamma\setminus\set{c}$ and corresponding conditions $c=((0,0),(0,0))$ that is only matched by traces where in between of the relevant $\ta$ and $\tb$ occurs exactly one letter that is not $\tc$. Unfortunately, it is not possible to express negation in general, so we cannot express the following: $\ta$ and $\tb$ have one or two letters in between, none is $\tc$.

\subsection{About containment}\label{subsec:containment} 

This section is dedicated to a characterisation of containment, a classical 
property considered in database theory. 
\par 

An \swggquery \query\ is called an
\emph{$(\ell,\ws,\wC)$-\swggquery}  (over $\VS$ and $\TS$) if 
	$\query = (s,\ws, \wC)$ with $|s| = \ell$, 
\emph{$(\ell,\ws,\wc)$-\dswgqueries} are analogously defined. 
The para- meter $\ell$ will be called \emph{string length}. If the maximal size 
of typesets occuring in an $(\ell,\ws,\wc)$-\dswgquery $\query$ is bounded by a 
number $k\geq 1$, we call $\query$ an $(\ell,\ws,\wc,k)$-\dswgquery 
(for \swggqueries $k$ always equals $1$).
\par 

Given an \swggquery{} or \dswgquery{} $\query$ we omit the prefix and call
$\query$ a \emph{query}, if $\query$ may be both, or it is clear from the 
context whether $\query$ is an \swggquery{} or \dswgquery.
We use \emph{$(\ell,\ws,\WC)$-query} (or \emph{$(\ell,\ws,\WC,k)$-query}) as 
notation for queries which might be \swggquery{} \emph{or} \dswgquery and assume 
that the gap-size constraints $\WC$ are compatibile with $\ell$ and $\ws$ or 
satisfy Lemma~\ref{cond:swggquery-compatibility}, respectively.

We write $\types(\query)$ (or $\types(s)$), $\typesets(\query)$ 
(or $\typesets(s)$) and $\vars(\query)$ (or $\vars(s)$) for the set of types, 
the set of all typesets and the set of variables, respectively, that occur in 
$\query$'s query string $s$. I.e., 
\begin{align*}
	\types(\query)&\deff\{\gamma \in \TS \,|\, \text{ there ex. }i\in[|s|]: \gamma \in s[i] \}\\ 
	\typesets(\query)&\deff\{\dc\subseteq\TS \,|\, \text{there ex. }i\in[|s|]: s[i]=\dc\}\\
	\vars(\query)&\deff\{x\in\VS \,|\, \text{there ex. }i\in[|s|]: s[i]=x\}
\end{align*}

\noindent
For the reason of readability we omit braces in query strings if the set 
consists only of an unique element. Therefore, we write for example 
	$s = x_1\ta\tb x_2\ta\set{\ta,\tc}\tb$ 
instead of $s = 
	x_1\set{\ta}\set{\tb}x_2\set{\ta}\set{\ta,\tc}\set{\tb}$. 
Vice versa, we can consider a query string $s$ over $\VS \cup \Gamma$ as a 
string over $\VS \cup \P_1(\TS)$ where every 
$s[i]$ is a singleton.

A query $\query$ is said to be \emph{contained} in a query $\query'$ w.r.t. to
a set $\TSalt\subseteq\TS$ (we write $q \contResp{\Delta} q'$) if 
    $\Mods{\Delta}{\query} \subseteq \Mods{\Delta}{\query'}$.

\begin{Definition}\label{def:swggquery-homomorphism}\label{def:homomorphism}
A homomorphism from  $\query'$ to $\query$ is a substitution $h$ such that $h(s') = s$ and the following property holds:
\vspace*{-0.75em}
\begin{quote}
For every $z \in \Vars$ that occurs at least twice in the query string $s'$ of 
$\query'$ and is mapped to a subset of $\Gamma$ via $h$, we have $h(z)$ is a 
singelton.
\end{quote}
 We write $\query'\Hom\query$ to express that there exists a homomorphism
	from $\query'$ to $\query$.
\end{Definition} 
 
\noindent This additional property of homomorphisms feels arbitrary or artificial, but it is perfectly tailored to our discovery algorithm and if the considered class of queries in Definition~\ref{def:homomorphism} is the class of all $(\ell,\ws,\wC)$-\swggquery, then in any way, we have $s[i]$ is a singelton for all $i\in[\ell]$. Now, the following theorem gives a characterisation of containment.

\newcounter{Counter_Thm:queryContVsHom}
\setcounter{Counter_Thm:queryContVsHom}{\value{theorem}}
\begin{theorem}\label{thm:queryContVsHom}
Given some sufficiently large $\TS$.
Let $\query$ and $\query'$ be $(s,\ws,\WC)$-queries over $\VS$ and $\TS$.
If $\query$ and $\query'$ are satisfiable, it holds, that:
\vspace*{-0.75em}
$$ \query \contResp{\Gamma} \query' \iff \query'\Hom\query.$$ 
\end{theorem}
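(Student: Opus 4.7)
The argument splits into two directions; since $\query$ and $\query'$ share $\ws,\WC$, we have $|s|=|s'|=\ell$. For $(\Leftarrow)$ I would take any homomorphism $h$ with $h(s')=s$ and any witness $(\mu,e)$ of $t\models\query$, and verify that $(\mu\circ h, e)$ is a witness of $t\models\query'$. The composition $\mu\circ h$ maps $s'$ into a string in $\TS^+$ (for \dswgqueries this uses that $\mu$ produces singletons on any typeset in the image of $h$), the embedding $e$ remains admissible because the two queries share $\ws$ and $\WC$, and $\mu(h(s'))=\mu(s)\subseq_e t$ closes the argument.

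For $(\Rightarrow)$ the plan is the classical canonical-model argument, adapted to our gap-constrained, typeset-enriched setting. Using that $\TS$ is sufficiently large, I would fix a canonical substitution $\mu_0$ sending each $x\in\vars(s)$ to a pairwise distinct fresh type $\gamma_x$ not occurring in $\types(\query)\cup\types(\query')\cup\bigcup\typesets(\query')$, and (in the \dswgquery case) each typeset $\chi\in\typesets(s)$ to some chosen element $\gamma_\chi\in\chi$. Using the satisfiability of $\query$ together with Lemma~\ref{cond:swggquery-compatibility} (or the analogous condition for \dswgqueries), I would pick positions $1=p_1<\cdots<p_\ell$ of minimum span satisfying $\ws$ and $\WC$, and build a trace $t_{\query}$ of length $p_\ell$ whose canonical positions $p_i$ carry $\mu_0(s[i])$ and whose remaining positions carry pairwise distinct fresh fillers $\phi_j$ not used anywhere in $\query'$. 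By construction $t_{\query}\models\query$, so by containment $t_{\query}\models\query'$ via some witness $(\mu',e')$.

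The crux is a rigidity step: I would show that $e'(i)=p_i$ for every $i\in[\ell]$. Any embedding of $[\ell]$ into a trace satisfying the shared constraints $(\ws,\WC)$ has span at least $p_\ell-p_1+1=|t_{\query}|$, so $e'$ must cover $t_{\query}$ end-to-end, and the tight choice of each $p_i$ then pins $e'$ exactly. From $\mu'(s'[i])=\mu_0(s[i])$ at every $i$, I would define the candidate homomorphism by $h(s'[i]):=s[i]$. Well-definedness on a variable $y$ occurring at positions $i\neq j$ follows because $\mu'(y)$ is a single fixed type, and the freshness of the $\gamma_x$ and $\phi_j$ then forces $s[i]=s[j]$ as elements of $\VS\cup\types(s)\cup\typesets(s)$; the same calculation yields the side-condition of Definition~\ref{def:homomorphism}. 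For the \dswgquery typeset-to-typeset case ($s'[i]=\chi'$, $s[i]=\chi$), I would re-run the whole construction once per $\gamma\in\chi$ by choosing $\gamma_\chi:=\gamma$ and conclude $\gamma\in\chi'$; hence $\chi\subseteq\chi'$, making $h(\chi'):=\chi$ an admissible typeset-substitution.

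The main obstacle is the rigidity step for \swggqueries: generalised constraints $(c^-,c^+,r)_j$ couple non-consecutive positions with possibly different ranges, so a gap-by-gap argument is not available, and I would leverage Lemma~\ref{cond:swggquery-compatibility} to show that the minimum-span configuration of the $p_i$ is essentially rigid inside $t_{\query}$, leaving no alternative embedding for $e'$ that satisfies $\WC$. A secondary difficulty is the typeset-to-typeset analysis in the \dswgquery case, which requires combining witnesses from the family of canonical traces (one per element of each $\chi\in\typesets(s)$) into a single consistent $h$.
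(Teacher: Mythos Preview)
Your $(\Leftarrow)$ direction and the overall canonical-trace strategy for $(\Rightarrow)$ match the paper's proof. There is, however, a genuine gap in your \dswgquery argument concerning the singleton side-condition of Definition~\ref{def:homomorphism}. Suppose a variable $y$ occurs at positions $i\neq j$ in $s'$ and $s[i]=s[j]=\chi$ with $|\chi|\ge 2$. Your position-independent $\mu_0$ sends $\chi$ to one fixed $\gamma_\chi$, so in your canonical trace $t_{\query}[p_i]=t_{\query}[p_j]=\gamma_\chi$; the witness $(\mu',e')$ for $\query'$ is perfectly consistent with $\mu'(y)=\gamma_\chi$, and your ``freshness'' calculation yields only $s[i]=s[j]$, not $|s[i]|=1$. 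Thus your $h(y)=\chi$ fails to be a homomorphism, and the claim ``the same calculation yields the side-condition'' is false. (A related failure of well-definedness arises when $s[i]\neq s[j]$ are distinct typesets that happen to share the chosen representative.) The paper closes this gap by exploiting that \dswgquery substitutions are \emph{position-dependent}: choosing distinct $\gamma_1,\gamma_2\in\chi$ at positions $p_i,p_j$ produces a trace in $\mods{\query}$ for which no single value of $\mu'(y)$ can match both positions under the (here genuinely rigid) embedding, contradicting $\query\contResp{\Gamma}\query'$. Your ``re-run once per $\gamma\in\chi$'' handles the typeset-to-typeset inclusion but not this variable-to-typeset case; the fix is to vary $\mu_0$ per position, not per typeset.

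On the \swggquery rigidity step: your assertion that ``the tight choice of each $p_i$ then pins $e'$ exactly'' is too strong, and Lemma~\ref{cond:swggquery-compatibility} will not rescue it. With generalised constraints, intermediate positions can be free even in a minimum-span trace (take $\ell=3$, $\wC=\{(4,4,2)_1\}$: then $e'(1),e'(3)$ are pinned but $e'(2)$ is not). The paper does not claim uniqueness; instead it uses precisely your fresh fillers $\phi_j$ to argue that \emph{some} witness with the canonical embedding exists: if $e'(i)$ lands in a gap then $\mu'(s'[i])=\phi_j$, which is impossible when $s'[i]\in\types(\query')$ or when $s'[i]$ is a repeated variable (each $\phi_j$ occurs once), while for a single-occurrence variable one may simply redirect $e'(i)$ to $p_i$. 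So the mechanism you actually need is the fillers you already introduced, not Lemma~\ref{cond:swggquery-compatibility}.
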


\noindent Sufficiently large, in the context of Theorem~\ref{thm:queryContVsHom} 
means, $|\Gamma|\geq 2$ for the case of \dswgqueries. 
In the case of \swggqueries 
the size of $\TS$ depends on gap-size constraints with range greater than $1$. 
Intuitively, the necessary size of $\TS$ depicts how much structure information 
of $\query'$ can be hidden in the gaps of $q$. For further information and an 
example consider the theorems proof from page \pageref{Appenix:homtheorem} 
onwards.

\subsection{Correlation to \swgqueries}\label{subsec:swgqueries} 
Note that an \swggquery query $\query=(s,\ws,\wC)$ with 
$\wC=\{c_1,\ldots,c_{\ell-1}\}$ and $r_i=1$ for all $i\in[\ell-1]$
precisely corresponds to the notion of \swgqueries introduced in
\cite{KSSSW22}.
Let $\query$ be an $(\ell,\ws,\wc,k)$-query containing typesets over 
$\Pfinplus(\TS)$. If 
    $k=1$
this correpsonds to the syntax and semantics of \swgqueries as well.
\par 
In \cite{KSSSW23} a mapping between one-dimensional and multi-dimensional 
sequence data was introduced, such that a multi-dimensional trace matches a 
multi-dimensional query if and only if the corresponding one-dimensional trace 
matches the corresponding one-dimensional query. This mapping can be adapted to
\swggqueries{} and \dswgqueries{}. 

\section{Discovery}\label{sec:discovery-short}
The question of how meaningful \swggqueries{} and \dswgqueries{} can be 
discovered from a given set of traces is of peculiar interest and was 
answered algorithmically in \cite{KSSSW22} for \swgqueries.
We adapt these results to \swggqueries{} and \dswgqueries{}.
\par 
A \emph{sample} is a finite, non-empty set $\Sample$ of traces over $\TS$.
Given a sample $\Sample$, let $\TS_\Sample$ be the set of all types occurring in 
$\Sample$, i.e. $\bigcup_{\trace\in\Sample} \types(\trace)$.
The \emph{support} $\Supp(\query,\Sample)$ of a query $\query$ in $\Sample$
is defined as the fraction of traces in the sample that match $\query$, i.e.
    $\Supp(\query,\Sample)\deff
        \frac{|\{\trace\in\Sample\ :\ \trace\models\query\}|}{|\Sample|}$.
A \emph{support threshold} is a rational number $\supp$ with $0<\supp\leq 1$.
A query $\query$ is said to \emph{cover} a sample $\Sample$ \emph{with support}
$\supp$ if $\Supp(\query,\Sample)\geq\supp$.
Let $\Sample$ be a sample, $\supp$ be a support threshold and 
    $k\in[|\TS_\Sample|-1]$.
An $(\ell,\ws,\WC,k)$-query $\query$ is called \emph{descriptive for $\Sample$ 
w.r.t $(\supp,(\ell,\ws,\WC,k))$} if 
    $\Supp(\query,\Sample)\geq \supp$, 
and there is no other $(\ell,\ws,\WC,k)$-query $\query'$ with 
    $\Supp(\query',\Sample)\geq \supp$
and
    \query' \propContResp{\Gamma} \query.
A type $\type\in\TS$ (or a typeset $\dc \in \Pfinplus(\TS)$) \emph{satisfies $\supp$ 
w.r.t. to $\Sample$}, if the fraction of traces containing $\type$ (or some 
$\type\in\dc$) is greater than or equal to $\supp$. The set of all types (or 
typesets) that satisfy $\supp$ w.r.t. to $\Sample$ is 
\begin{align*}
    \TSalt(\Sample,\supp) \deff \{\dc\in\Pfinplus(\TS)\ :\ 
        \frac{
            |\{\trace\in\Sample\ :\ \text{ex. } \type\in \dc \text{ s.t. }
            \type\in\types(\trace)\}|}
        {|\Sample|} \geq\supp
        \}.
\end{align*}
We omit $\Sample$ and $\supp$, if they are clear from the context.
For $i\in[|\TS_\Sample|-1]$ we write $\TSalt_i$ to denote the subset of 
$\TSalt$ which contains only typesets of size $i$. For a descriptive
$(\ell,\ws,\WC,k)$-query $\query$, 
    $\typesets(\query)\subseteq \TSalt_1 \dot\cup \dots \dot\cup\TSalt_k$
holds.
This corresponds to
    $\TSalt = \TSalt_1 = \{\type\in\TS:
         \frac{|\{\trace\in\Sample\ :\ \type\in\types(\trace)\}|}{|\Sample|}
         \geq\supp\}
    $
if the considered query is an \swggquery.
\par 
Given an $(\ell,\ws,\WC,k)$-query $\query=(s,\ws,\WC)$ and a symbol $z$ from 
$\VS \cup \Pfinplus(\TS)$ we let 
    $\Pos{\query}{z} = \Pos{s}{z} = \{i\in[\ell]\ :\ s[i]=z\}$
be set of all positions $i$ in $s$ that carry $z$. 
Given a set of positions $P\subseteq[\ell]$, and a symbol 
    $z\in\VS \cup \Pfinplus(\TS)$
we write 
    $\replace{s}{P}{z}$ 
to denote the query string $s'$ which is obtained from $s$ by setting $s[i]$ to 
$z$, for all $i\in P$. 
Let $x\in\VS$. We write $\replace{s}{x}{z}$ as an abbreviation for 
    $\replace{s}{\Pos{\query}{x}}{z}$.
Next, we present the algorithmical idea for 
query discovery:
\par \vspace*{0.5em}
\textbf{Compute Descriptive Query Problem} ($\compDescProb$):
On input of a sample $\Sample$ over $\TS$, a support threshold $\supp$, 
a string length $\ell\in\nat$, a global window size $\ws\geq\ell$, 
a tuple $\WC$ of gap-size constraints, and $k\in[|\TS_\Sample|-1]$,
the task is to compute an $(\ell,\ws,\WC,k)$-query $\query$ that is
descriptive for $\Sample$ w.r.t. $(\supp,(\ell,\ws,\WC,k))$.
\par \vspace*{0.5em}
Pseudocode of an algorithm solving $\compDescProb$ is provided in 
Algorithm~\ref{algo:ComputeDescQueryFromq}. 
Given $\Sample$, $\supp$ and query parameters $(\ell,\ws,\WC,k)$ 
as input, the algorithm first builds the 
\emph{most general query} 
    $\query=\mgq$ 
for $(\ell,\ws,\WC,k)$. Its query string consists of $\ell$ distinct variables, 
i.e. $\mgs = x_1 \dots x_\ell$, and $\mgq$ is most general in the sense that 
    $\query' \contResp{\TS} \mgq$
for each $(\ell,\ws,\WC,k)$-query $\query'$. 
\begin{algorithm}[h!tbp]
	\SetAlgoNoEnd
    \SetAlCapNameFnt{\small}
    \SetAlCapFnt{\small}
	\LinesNumbered
	\DontPrintSemicolon
	\SetKwInOut{Input}{Input}
	\SetKwInOut{Output}{Returns}
	\Input{sample $\Sample$; support threshold $\supp$ with $0<\supp\leq
		1$; $(\ell,\ws,\WC,k)$ 
        }
	\Output{descriptive query $\query$ for $\Sample$ w.r.t.\ 
		$(\supp,(\ell,\ws,\WC,k))$ or error message $\Error$}
    $s := \mgs$;\ \ $q:=(\mgs,\ws,\WC)$
        \label{algo:MgqDefLine}
        \tcp*{$\scriptstyle\text{query string and query}$}
    \lIf{$\Supp(\query,\Sample) < \supp$\label{algo:ErrorMsgLineBeginning}}
    {\textbf{\textup stop and} \Return{$\Error$}}
    $\TSalt:= \TSalt_1 \dot\cup \dots \dot\cup \TSalt_k$
        \label{algo:TSaltDefLine}
        \tcp*{$\scriptstyle\text{typesets to be considered}$}
	$\NR:=\vars(q)$;\ \ $\AV:=\varnothing$
        \tcp*{$\scriptstyle\text{unvisited and available variables}$}
    \While{$\NR \neq \varnothing$\label{algo:mainLoopLine}}{
		\textbf{select} an arbitrary $x\in \NR$ and let 
            $\NR:=\NR\setminus\set{x}$ and
            $\TSalt_1 := \TSalt_1\cup\AV$
		    \label{algo:selectNextPositionLine}\;
        \For{$i=1$ to $k$\label{algo:forDeltaiLoopLine}}{            
            $\textup{replace}:=\textsf{False}$\;
            \label{algo:replaceToFalseLine}
            \While{$\TSalt_i\neq\varnothing$\label{algo:innerLoopLine}}{
                \textbf{select} an arbitrary $y\in\TSalt_i$ and 
                let $\TSalt_i :=\TSalt_i\setminus\set{y}$
                    \label{algo:selectNextReplacementLine}\; 
                $\query' := (\replace{s}{x}{y},\ws,\WC)$\;
                \label{algo:ReplaceForCheckLine} 
                \uIf{$\Supp(\query',\Sample) \geq \supp$
                    \label{algo:requestsToOracleLine}}{
                        $s:= \replace{s}{x}{y}$;\ \ $\textup{replace}:=\textsf{True}$
                            \tcp*{$\scriptstyle\text{ReplaceOp}$}
                            \label{algo:ReplaceOpLine} 
                        \textbf{break} for loop
                }
		    }
        }
		\lIf{\textup{replace} is \textsf{\upshape False}}{
            $\AV:= \AV\cup\set{x}$
            \tcp*[f]{$\scriptstyle\text{NoChangeOp}$}
            \label{algo:noChangeOpLine}
		}
	}
	\textbf{\textup stop and}\label{algo:queryoutputline}
    \Return{$q:=(s,\ws,\WC)$}\;
    \caption{$\QComputeDescrQueryFromq$}
	\label{algo:ComputeDescQueryFromq}
\end{algorithm}
If $\Supp(\query,\Sample)<\supp$ the algorithm stops and returns $\Error$
(line~\ref{algo:ErrorMsgLineBeginning}), because no other query $\query'$ with  
    $\query'\contResp{\TS}\query = \mgq$
can describe $\Sample$ with sufficient support.
\par 
Otherwise, the algorithm searches for an admissable \emph{replacement operation}
for each variable $x\in\NR\deff\vars(s)=\{x_1,\ldots,x_\ell\}$ during the main 
loop (Line~\ref{algo:mainLoopLine}). 
A replacement operation replaces $x$ by an element $y\in\TSalt_i$ (during the 
$i$-th iteration of the for-loop) which may be a typeset or an 
\emph{available} variable $y\in\AV$ (if i=1). 
The replacement operation is stored in 
$\query$ and called \emph{admissable} if the resulting query satisfies the 
support threshold (lines~\ref{algo:ReplaceForCheckLine}--\ref{algo:ReplaceOpLine}). 
If $\Supp(\replace{}{\Pos{\query}{x}}{y},\Sample)<\supp$ for all $y\in\TSalt_i$
and all $i\in[k]$ 
the query string remains unchanged and $x$ gets available 
(line~\ref{algo:noChangeOpLine}).
After each variable in $\vars(\mgs)$ has been considered, the algorithm
terminates and produces the current query as output 
(line~\ref{algo:queryoutputline}).
\par 
Next we depict an exemplaric run of Algorithm~\ref{algo:ComputeDescQueryFromq}. 
We refer to the appendix for a brief discussion, why 
$\TSalt$ is passed through incrementally in Line~\ref{algo:forDeltaiLoopLine}.
\newcounter{Counter_Exa:Run}
\setcounter{Counter_Exa:Run}{\value{theorem}}
\begin{Example} \label{examp:run}
    Let $\TS=\{\ta,\tb,\tc\}$ and $x_1, x_2, x_3\in\VS$. Consider the sample    
    $\Sample=\{\ta\tb\tb,\ta\tc\tc\}$, $\supp=1.0$, $\ell=\ws=3$, 
    $\wc = ((0,0),(0,0))$ and $k=2$.
    \par 
    On input $(\Sample,\supp,(\ell,\ws,\wc,k))$ the algorithm first 
    generates $\query = (x_1 x_2 x_3,\ws,\wc)$. Since $\query$
    satisfies the support threshold the algorithm proceeds by computing 
        $\TSalt =   \{ \{\ta\}\} \dot\cup 
                    \{\{\ta,\tb\}, \{\ta,\tc\}, \{\tb,\tc\} \}$.
    Assume the algorithm selects $x:=x_3$ during the first iteration of the 
    main loop. It turns out that $\TSalt_1 = \{ \{\ta\}\}$ does not contain a 
    typeset for an admissable replacement of $x_3$. Hence, the algorithm 
    considers $\TSalt_2$ during the second transition of the for-loop in 
    Line~\ref{algo:forDeltaiLoopLine}. The only admissable replacement is 
        $\replace{s}{x_3}{\{\tb,\tc\}}$, 
    and $s$ is replaced by $x_1 x_2 \{\tb,\tc\}$ ($\AV$ remains empty).
    \par 
    Let us assume that during the second transition through the main loop the
    algorithm selects $x:=x_1$ and $y:=\{\ta\}\in\TSalt_1$.
    The replacement of $x_1$ by $\{\ta\}$ is admissible 
    (as it has support 1 on $\Sample$). Therefore, $s$ is replaced by 
        $\{\ta\} x_2 \{\tb,\tc\}$
    and $\AV$ remains unchanged again.
    \par
    In its last iteration (during the second transition through the for-loop),
    $\replace{s}{x_2}{\{\tb,\tc\}}$ is the only admissible replacement 
    operation. The run terminates after this iteration and outputs the query 
    $\query=(s,\ws,\wc)$ with $s=\{\ta\} \{\tb,\tc\} \{\tb,\tc\}$.
\end{Example}
\begin{figure}[h!]
    \begin{center}
        \begin{tikzpicture}
            \node[] (start) at (0,0.99) {};
            \node[] (a) at (-3,1) {$\{a\}$};
            \node[] (b) at (-1,1) {\textcolor{blue}{$\{b\}$}};
            \node[] (c) at (1,1) {\textcolor{blue}{$\{c\}$}};
            \node[] (d) at (3,1) {$\{d\}$};
            \node[] (ab) at (-4,2) {\textcolor{blue}{$\{a,b\}$}};
            \node[] (ac) at (-2.5,2) {\textcolor{blue}{$\{a,c\}$}};
            \node[] (ad) at (-1,2) {$\{a,d\}$};
            \node[] (bc) at (1,2) {\textcolor{blue}{$\{b,c\}$}};
            \node[] (bd) at (2.5,2) {\textcolor{blue}{$\{b,d\}$}};
            \node[] (cd) at (4,2) {\textcolor{blue}{$\{c,d\}$}};
            \node[] (abc) at (-3,3) {\textcolor{gray}{$\{a,b,c\}$}};
            \node[] (abd) at (-1,3) {\textcolor{gray}{$\{a,b,d\}$}};
            \node[] (acd) at (1,3) {\textcolor{gray}{$\{a,c,d\}$}};
            \node[] (bcd) at (3,3) {\textcolor{gray}{$\{b,c,d\}$}};
            \node[] (abcd) at (0,3.75) {\textcolor{gray}{$\{a,b,c,d\}$}};
            \path   (a) edge (ab)
                    (a) edge (ac)
                    (a) edge (ad)
                    (b) edge (ab)
                    (b) edge (bc)
                    (b) edge (bd)
                    (c) edge (ac)
                    (c) edge (bc)
                    (c) edge (cd)
                    (d) edge (ad)
                    (d) edge (bd)
                    (d) edge (cd)
                    (ab) edge [gray] (abc)
                    (ab) edge [gray] (abd)
                    (ac) edge [gray] (abc)
                    (ac) edge [gray] (acd)
                    (ad) edge [gray] (abd)
                    (ad) edge [gray] (acd)
                    (bc) edge [gray] (abc)
                    (bc) edge [gray] (bcd)
                    (bd) edge [gray] (abd)
                    (bd) edge [gray] (bcd)
                    (cd) edge [gray] (acd)
                    (cd) edge [gray] (bcd)
                    (abc) edge [gray] (abcd)
                    (abd) edge [gray] (abcd)
                    (acd) edge [gray] (abcd)
                    (bcd) edge [gray] (abcd);                
        \end{tikzpicture}
    \end{center}
    \caption{
        Let $\TS=\{\ta,\tb,\tc,\td\}$, 
        $\Sample = \{\tc\ta\tb\tb\tc\ta\tc\tb,\ \tc\tb\tb\tb\ta\tc\tc\tb,
        \tc\tc\tb\tb\tc\tc\tc\tb\}$,
        $\supp=1.0$ and $k=2$.
        Depicted is a top-down walk through $\Pfinplus(\TS)$
        for $\Sample$, starting with typesets of size $k=2$. The typesets marked
        in blue represent $\TSalt$.
        }
    \label{fig:TSaltCal-FullTree}
\end{figure}
Note that algorithm~\ref{algo:ComputeDescQueryFromq} computes an \swggquery{}
if $k=1$ and generalised gap-size constraints are given. Furthermore, during 
each iteration of the main loop, the for-loop is transisted only once.
It remains to discuss how $\TSalt$ can be calculated in case that $k>1$.
Starting with all subsets $\dc$ of $\Pfinplus(\TS)$ with 
    $|\dc|=k$
it suffices to explore $\Pfinplus(\TS)$ in a top-down manner: 
we walk through the search space level-wise and check whether the 
current typesets satisfy $\supp$ w.r.t. $\Sample$. If this is not the case 
for a typeset $\dc$, all typesets $\dc'\subset\dc$ can be deleted from the 
search space, since they do not satisfy $\supp$.
An example is depicted in Figure~\ref{fig:TSaltCal-FullTree}.

\newcounter{Counter_Thm:Thm:discovery}
\setcounter{Counter_Thm:Thm:discovery}{\value{theorem}}
\begin{theorem}\label{thm:ComputeDescrQuery}
    Given some sufficiently large $\TS$.
    Let $\Sample$ be a sample, let $\supp$ be a support threshold with 
    $0<\supp\leq 1$, let $(\ell,\ws,\WC,k)$ be query parameters 
    with $k=1$ if $\WC=\wC$.
	\begin{enumerate}[(a)]
		\item
            If there does not exist any $(\ell,\ws,\WC,k)$-swgg-- or dswg--query 
            that is descriptive for $\Sample$ w.r.t.\ $(\supp,(\ell,\ws,\WC,k))$
            then there is only one run of
            Algorithm~\ref{algo:ComputeDescQueryFromq} upon the defined input,
            and it stops in line~\ref{algo:ErrorMsgLineBeginning} with output 
            $\Error$.
		\item
		    Otherwise, every run of Algorithm~\ref{algo:ComputeDescQueryFromq} 
            upon input $(\Sample,\supp,(\ell,\ws,\WC,k))$ terminates and 
            outputs an \swggquery{} or \dswgquery{} $\query$ (depending on $k$), 
            with $|\dc|\leq k$ for all $\dc\in\typesets(\query)$, that is 
            descriptive for $\Sample$ w.r.t.\ $(\supp,(\ell,\ws,\WC,k))$.
	\end{enumerate}
    \vspace*{-1em}
\end{theorem}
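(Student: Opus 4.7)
The plan is to handle the two parts separately, with the bulk of the effort devoted to descriptiveness in part (b).

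For part (a), the key observation is that the class of $(\ell,\ws,\WC,k)$-queries is finite modulo variable renaming, since $\ell$ is fixed, typesets may only come from the finite set $\TSalt\subseteq \Pfinplus(\TS_\Sample)$, and at most $\ell$ distinct variables are needed. Containment $\contResp{\Gamma}$ is therefore well-founded on this class, so whenever $\Supp(\mgq,\Sample)\geq\supp$ one can iteratively move to a $\contResp{\Gamma}$-minimal query among those of sufficient support, and this minimal query is descriptive by definition. Contrapositively, the non-existence of a descriptive query forces $\Supp(\mgq,\Sample)<\supp$, and the only (branch-free) run of the algorithm stops at line~\ref{algo:ErrorMsgLineBeginning} with $\Error$.

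For part (b), termination is immediate since the main loop has exactly $\ell$ iterations and each inner loop is bounded by $|\TSalt|+\ell$. An induction on the iterations shows that every intermediate query $\query_j$ preserves $\Supp(\query_j,\Sample)\geq\supp$ (line~\ref{algo:requestsToOracleLine} guards every replacement, with $\query_0=\mgq$ as base case), so the output inherits sufficient support; and $|\dc|\leq k$ for each $\dc\in\typesets(\query)$ is clear from construction since replacements are drawn from $\TSalt_1\dot\cup\cdots\dot\cup\TSalt_k$. The remaining, central task is to prove that the output $\query$ is descriptive.

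Assume towards a contradiction that $\query$ admits some $(\ell,\ws,\WC,k)$-query $\query^*$ with $\query^*\propContResp{\Gamma}\query$ and $\Supp(\query^*,\Sample)\geq\supp$. Theorem~\ref{thm:queryContVsHom} yields a homomorphism $h\colon \query\to \query^*$ that is not a mere renaming. Pick the earliest-processed variable $x$ whose symbol $y$ at its positions in $\query$ is non-trivially moved by $h$; the possibilities are (i) $y$ is a typeset $\dc$ and $h(\dc)=\dc'\subsetneq\dc$, (ii) $y$ is a variable and $h(y)$ is a typeset, or (iii) $y$ is a variable and $h(y)$ is a variable distinct from $y$. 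In each case $h(y)$ is either a typeset in $\TSalt_{|h(y)|}$ with $|h(y)|\leq k$ (because $\Supp(\query^*,\Sample)\geq\supp$ forces the typeset to occur frequently enough in $\Sample$), or a variable already in $\AV$ at the iteration when $x$ was processed (by the choice of $x$ as the earliest such candidate). The intended contradiction is that the algorithm would have accepted $h(y)$ as replacement for $x$: in case (i) it is tried strictly before $y$ because $|h(y)|<|y|$ and the for-loop of line~\ref{algo:forDeltaiLoopLine} ascends in $i$, and in cases (ii) and (iii) its admissibility would have prevented $x$ from joining $\AV$ via line~\ref{algo:noChangeOpLine}.

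The main obstacle is verifying admissibility of this replacement, i.e.\ $\Supp(\query',\Sample)\geq\supp$ for the intermediate query $\query'$ obtained from $\query_{j-1}$ (the algorithm's state just before processing $x$) by swapping $x$ for $h(y)$. The approach is to assemble a homomorphism $\query'\to\query^*$ by using $h$ on the already-finalized positions of $\query_{j-1}$ together with the natural map sending each still-unprocessed variable $x_{i'}$ to $\query^*$'s symbol at the corresponding position. Theorem~\ref{thm:queryContVsHom} then delivers $\query^*\contResp{\Gamma}\query'$, whence $\Supp(\query',\Sample)\geq\Supp(\query^*,\Sample)\geq\supp$, the desired contradiction. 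The subtlety is honouring the extra singleton condition of Definition~\ref{def:homomorphism} for variables that occur at several positions of $\query'$; handling this forces a careful case split (most delicate in case (iii), where available-variable reuse can produce repeated variables) and makes essential use of the hypothesis that $\TS$ is sufficiently large, so that fresh types are available exactly as they are in the proof of Theorem~\ref{thm:queryContVsHom}.
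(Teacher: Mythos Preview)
Your part~(a) is fine and matches the paper's reasoning. For part~(b), your overall strategy --- assume non-descriptiveness, obtain a homomorphism $h\colon\query\to\query^*$ via Theorem~\ref{thm:queryContVsHom}, and locate a step where the algorithm should have accepted a better replacement --- is close in spirit to the paper's, but your formulation of case~(iii) has a genuine gap.

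The condition ``$h(y)$ is a variable distinct from $y$'' is name-dependent: the variables of $\query^*$ may be renamed freely. Suppose the output has query string $x_1\,x_1\,x_3$ (processing order $x_1,x_2,x_3$; $x_1$ undergoes \textsf{NoChange}, $x_2$ is \textsf{VarRep}'d to $x_1$, $x_3$ undergoes \textsf{NoChange}) and $\query^*$ has string $z\,z\,z$. Naming $z=x_2$ makes $x_1$ your earliest ``moved'' variable, yet at its step $\AV$ was empty, so the claim ``$h(y)$ is a variable already in $\AV$'' fails outright; naming $z=x_1$ instead makes $x_3$ the earliest and the argument goes through. More fundamentally, $h(y)$ is a variable of $\query^*$, not of the algorithm's state, and you never explain how to identify it with an element of $\AV_{r-1}$. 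The right diagnosis is position-based: non-isomorphism yields positions $i,j$ with $s^*[i]=s^*[j]$ but $s[i]\neq s[j]$, and the replacement you need at the later-processed of these positions is the symbol sitting at the earlier-processed one in the current intermediate query. Making this precise forces you to maintain the invariant that already-processed positions of $\query_r$ are partially isomorphic to $\query^*$ --- exactly the paper's route (Definition~\ref{def:partIsom}, Lemma~\ref{lem:SimToIsom}, Claim~\ref{claim:SimInduction}); its forward induction on $\query_r\sim_{I_r}\query'$ handles all three of your cases uniformly and is immune to renaming. A smaller point: the hypothesis that $\TS$ is large is used only once, to invoke Theorem~\ref{thm:queryContVsHom} (equivalently Remark~\ref{rem:DescriptiveIsHomDescriptive}); the combinatorial core needs no fresh types, so your closing sentence misdiagnoses where the difficulty lies.
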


\noindent We refer to the appendix for the full proof. 
Analysing the complexity of the algorithm, identifies two bottle necks. 
First the $\Delta$-calculation in the case of \dswgqueries. This can be 
handeled by adjusting the parameter $k$, i.e. by bounding the size of the 
disjunctive clauses in the query string. 
The second is already known from \cite{KSSSW22} and is caused by the recurring
calls of a \emph{matching} subroutine. The refered results imply 
$\npclass$-hardness for our algorithm as we can use it as well for \swgqueries 
from \cite{KSSSW22}. Membership can be obtained by guessing a witness.
\vspace*{0.75em}

\section{Conclusion and Future Work}\label{sec:conclusion}
We model sequence data as traces and discover descriptive queries over traces 
to find a characteristic template for situations of interests. 
Since an increased expressive power of the underlying query language leads to 
a more detailed picture of sois, we extended \swgqueries, introduced in 
\cite{KSSSW22}, in two different ways. First, by generalising the gap size 
constraints (Section~\ref{subsec:swggqueries}) and second, by adding the 
possibilty of disjunctions (Section~\ref{subsec:dswgqueries}). 
We adopted and extended the discovery algorithm to our approach and ensured 
that the essential complexity properties are preserved 
(Section~\ref{sec:discovery-short}). 
Note that the extended approach can be applied to the multi-dimensional 
setting, analogously to \cite{KSSSW23}.

For future work we will merge both extensions to one query language. We are 
interested in a more general notion of disjunction and negation, and a more 
generous possibilty to describe gaps. For the latter \cite{DayEtAl2022} is a 
good yardstick. 
An in-depth (parameterised) complexity analysis is intended as well. Since the 
crucial point is the inherent complexity of the matching problem, we are 
working on data strcutures to improve the computation in practical application.
In the long run we will investigate containment for relaxed query parameters 
$(\ell,\ws,\wc, k)$.

\section*{Acknowledgments}
  We thank Markus L. Schmid for useful discussions.
  Sarah Kleest-Mei{\ss}ner was supported by the German Research Foundation 
  (DFG), CRC 1404: ``FONDA: Foundation of Workflows for Large-Scale 
  Scientific Data Analysis''.

\bibliographystyle{plain}
\bibliography{bibfile}

\appendix
\section*{APPENDIX}

This appendix contains technical details and further information which were 
omitted in the main part of the paper.
\begin{itemize}
    \item Appendix~\ref{app:satisfiabOfSWGG} considers satisfiability of \swggqueries and contains a proof of Lemma~\ref{cond:swggquery-compatibility}.
    \item Appendix~\ref{app:thm7} provides a proof of Theorem~\ref{Appenix:homtheorem} 
        including a detailed example discussing the requiered size of $\TS$.
    \item In Appendix~\ref{app:iso} the definitons of (partial) isomorphisms are
        given, which are crucial the proof provided in Appendix~\ref{app:discover}.
    \item Appendix~\ref{app:discover} provides detailed information on
        Example~\ref{examp:run}, i.e. a particular run of Algorithm~\ref{algo:ComputeDescQueryFromq},
        and a proof of Theorem~\ref{thm:ComputeDescrQuery}, stating that the discovery algorithm
        is correct.
\end{itemize}
\section{Regarding Satisfiability of \swggqueries}\label{app:satisfiabOfSWGG}

\newcounter{restoreAppTheorem} 

Let $\query=(s,\ws,\wC)$ be an \swggquery over $\TS$ and $\TSalt\subseteq\TS$.
We define \Mintrace{\TSalt}{\query} to be the set of all traces of minimal 
length matching $\query$, i.e. 
    $\Mintrace{\TSalt}{\query} \deff \{ \trace\in\TSalt^+ \ :\ 
        \trace\models\query \text{ and }
        \text{there exists no } \trace'\in\TSalt^+ \text{ with }  
            \trace'\models\query \text{ and } |\trace'|<|\trace|
    \}$
Given a trace $\trace\in\mintrace{\query}$, let $g_i\in\TS^\star$ be the 
\emph{gap string} of minimal length (according to $\wC$) between $s[i]$ and 
$s[i+1]$, for all $i\in[\ell-1]$. 
We observe that $\ws$ and $\wC$ can only be compatible if 
    $|s|+\sum_{i=1}^{\ell-1} |g_i| \leq \ws$.

Given two sequences $C'$ and $C''$ from $\wC$ such that the generalised gap-size 
constraints within $C'$ and $C''$ are non-overlapping. Intuitively speaking,
$C'$ and $C''$ can only be compatible if the induced gap strings (of minimal
length) of $C'$ do not contradict the upper bounds on gap strings induced by 
$C''$.
These observations can be formalised as follows and ensure
the satisfiablity of \swggqueries.

\setcounter{restoreAppTheorem}{\value{theorem}}
\setcounter{theorem}{\value{Counter_Thm:Lemma_cond:swggquery}}
\begin{Lemma}\textbf{(restated)}
    An \swggquery $\query=(s,\ws,\wC)$ (over $\VS$ and $\Gamma$) is satisfiable, i.e. $\mods{\query} \neq \varnothing$, iff  there are no two sequences 
        \begin{align*}  
            C'  &=\left( 
                (c_1'^-,c_1'^+,r'_1)_{j'_1 }, \ 
                (c_2'^-,c_2'^+,r'_2)_{j'_2 }\ ,
	            \ldots, 
                (c_{|C'|}'^-,c_{|C'|}'^+,r'_{|C'|})_{j'_{|C'|}}\right) 
	    \end{align*}
	    and
        \begin{align*} 
            C'' &=\left( (c_1''^-,c_1''^+,r''_1)_{j''_1 }, \ 
                (c_2''^-,c_2''^+,r''_2)_{j''_2 }\ ,
                \ldots,  
                (c_{|C''|}''^-,c_{|C''|}''^+,r''_{|C''|})_{j''_{|C''|}}\right) 
	    \end{align*}
	    from $\wC \cup \set{ (0,\infty,1)_1, \ldots, (0,\infty,1)_{|s|-1} }$ 
        where 
        $j'_1 = j''_1$,  
        $j'_{|C'|}+r'_{|C'|} = j''_{|C''|}+r''_{|C''|}$, and 
        $j'_{i+1} = j'_i+r'_i$ and $j''_{i+1} = j''_i+r''_i$ 
        for all $i\in[|C'|-1]$, 
        with
	    \begin{quote}
	        \begin{enumerate}[(i)]
                \item $|s| + \sum\limits_{i=1}^{|C'|} c'^-_i-r_i'+1 \quad > \quad \ws$ \qquad, or
                \item $ \sum\limits_{i=1}^{|C'|} c'^-_i-r_i'+1 \quad > \quad \sum\limits_{i=1}^{|C''|} c''^+_i-r_i''+1$ 
	        \end{enumerate}	
        \end{quote} 
\end{Lemma}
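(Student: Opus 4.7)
The plan is to prove both directions of the biconditional: the forward direction (satisfiability implies no such $C', C''$) by contrapositive, and the backward direction constructively via a difference-constraint / shortest-path argument.

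For the forward direction, assume sequences $C'$ and $C''$ of the stated form exist satisfying (i) or (ii), and suppose for contradiction that $(\mu, e)$ is a witness for $\trace \models \query$ for some $\trace$. Telescoping the lower-bound constraints along the chain $C'$ yields
\[
 e\bigl(j'_{|C'|} + r'_{|C'|}\bigr) - e(j'_1) \ \geq \ \sum_{i=1}^{|C'|} (c_i'^- + 1).
\]
In case (ii), telescoping the upper bounds along $C''$ analogously gives $e(j''_{|C''|}+r''_{|C''|}) - e(j''_1) \leq \sum_{i=1}^{|C''|} (c_i''^+ + 1)$. Since both chains share endpoints and have equal total range ($\sum_i r'_i = \sum_i r''_i$), rewriting each side via $\sum (c+1) = \sum (c-r+1) + \sum r$ and cancelling $\sum r$ yields $\sum (c_i'^- - r_i' + 1) \leq \sum (c_i''^+ - r_i'' + 1)$, contradicting (ii). In case (i), extend $C'$ with implicit $(0, \infty, 1)_j$ constraints, each contributing $0-1+1=0$ to the sum, so that the extended chain tiles $[1, |s|]$; together with $e(|s|) - e(1) + 1 \leq \ws$ this forces $|s| + \sum (c_i'^- - r_i' + 1) \leq \ws$, contradicting (i).

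For the backward direction, suppose no such $C', C''$ exist and construct a witness. The admissible embeddings $e : [|s|] \to \mathbb{N}$ are precisely the integer solutions of the difference-constraint system: $c^- + 1 \leq e(j+r) - e(j) \leq c^+ + 1$ for each $(c^-, c^+, r)_j \in \wC$, $e(i+1) - e(i) \geq 1$ for consecutive positions, and $e(|s|) - e(1) \leq \ws - 1$. By the classical Bellman--Ford characterisation such a system has an integer solution iff the associated constraint graph has no negative cycle; the graph has vertices $1, \ldots, |s|$, forward edges $j \to j+r$ of weight $c^+ + 1$ and backward edges $j+r \to j$ of weight $-(c^- + 1)$ per constraint in $\wC$, backward implicit edges $i+1 \to i$ of weight $-1$, and the window edge $1 \to |s|$ of weight $\ws - 1$. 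Given any feasible $e$ (with $e(1)=1$ after shifting), a matching trace $\trace$ of length $e(|s|) \leq \ws$ is produced by setting $\trace[e(i)] = \mu(s[i])$ for a substitution $\mu$ mapping every variable to a fixed type from $\TS$ and filling the remaining positions arbitrarily.

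The main obstacle is the reduction from an arbitrary negative cycle to the simple chain structure required by (i) or (ii). A negative cycle containing the window edge must return from $|s|$ to $1$ via backward edges; padding the used lower-bound constraints with implicits produces a chain $C'$ tiling $[1, |s|]$ whose negative weight rewrites as inequality (i). A negative cycle not using the window decomposes into forward (upper-bound) and backward (lower-bound or implicit) arcs with balanced total displacement; to extract a pair $(C'', C')$ of chains sharing endpoints and tiling a common sub-interval, I plan an exchange argument on a negative cycle of minimum edge count: any ``zigzag'' can be split at a repeated vertex or at an extremal position into two shorter cycles whose weights sum to the original, at least one of which remains negative, and iterating yields a simple up-then-down cycle directly corresponding to $C''$ and $C'$ (with negativity rewriting as (ii) after cancelling $\sum r'_i = \sum r''_i$). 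Verifying that the chain structure is preserved through this reduction, which crucially exploits the linear order on positions and the interval-aligned nature of the constraints in $\wC$, is the delicate technical step.
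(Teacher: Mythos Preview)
Your forward direction is correct and cleaner than the paper's treatment; the paper's own argument is explicitly a sketch and really only establishes that the existence of such $C',C''$ forces unsatisfiability, essentially by restating what the two sums measure.

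For the backward direction your Bellman--Ford framing is the right tool: satisfiability of $\query$ is exactly the absence of a negative cycle in the constraint graph you describe. The paper does not attempt this direction beyond the trivial case $\wC\subseteq\{(0,\infty,1)_j\}$, so here you go well beyond the paper. However, the step you flag as ``delicate'' --- reducing an arbitrary negative cycle to a single up-then-down pair $(C'',C')$ sharing endpoints --- does not go through, and your exchange argument cannot rescue it. Take $|s|=5$, $\ws$ large, and
\[
\wC=\bigl\{(4,10,4)_1,\ (5,\infty,2)_3,\ (1,2,1)_3,\ (8,\infty,3)_1\bigr\}.
\]
From $e(4)-e(1)\ge 9$ and $e(4)-e(3)\le 3$ one gets $e(3)-e(1)\ge 6$; together with $e(5)-e(3)\ge 6$ this forces $e(5)-e(1)\ge 12$, contradicting $e(5)-e(1)\le 11$. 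So $\query$ is unsatisfiable and the constraint graph has the simple negative $4$-cycle $1\to 5\to 3\to 4\to 1$. Yet no chain pair triggers~(ii): the only finite upper-bound chain is $(4,10,4)_1$ over $[1,5]$ with sum $7$, while every lower-bound chain over $[1,5]$ sums to at most $6$ (the best is $(8,\infty,3)_1$ followed by an implicit); over every other interval the minimal $C''$-sum is $\infty$. Your minimum-edge-count cycle here is already simple and has no repeated vertex to split at, but it is \emph{not} up-then-down: the key derived bound $e(3)-e(1)\ge 6$ genuinely mixes a lower-bound constraint on $[1,4]$ with an upper-bound constraint on $[3,4]$, and this nesting cannot be expressed by two non-overlapping chains with common endpoints.

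In short, the gap is not a missing trick in your decomposition; the chain form demanded by~(ii) is too restrictive to witness every negative cycle, so the equivalence as stated appears to need either an additional hypothesis on $\wC$ or a more general notion of witness. The paper's sketch does not confront this issue.
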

\setcounter{theorem}{\value{restoreAppTheorem}}

\begin{proof}[Proof (Sketch).]

First, it is easy to verify, that an \swggquery $\query=(s,\ws,\wC)$ with $$\wC \subseteq \set{ (0,\infty,1)_1, \ldots, (0,\infty,1)_{|s|-1} }$$ has a model by assigning every variable of $s$ to some arbitary element of $\Gamma$.
 
We observe that $C'$ and $C''$ are non-overlapping and connected sequences of conditions, that means,  taking a component $(c^-,c^+,r)_j$ of such a sequence, it speaks about the gaps between position $j$ and $r+j$ of the query string and the following condition connects seamlessly. So the sequence speaks about an entire (part) of the query string.  Note, that every sequence can be enriched by conditions of form $(0,\infty,1)$ to an non-overlapping and connected sequence speaking about the entire query string.

Next, for such a sequence $C$ the sum $\sum\limits_{i=1}^{|C|} c^-_i-r_i+1$ is the sum of all minimal gaps and  $\sum\limits_{i=1}^{|C|} c^+_i-r_i+1$ is the sum of all maximal gaps fulfilled by all models of the (part of the) query string.

Hence, obviously, the first inequality states that the minimal size of a model is larger than the global size $w$. And for the second inequality, we have that for (part of) the query string  the total number of gap filling letters in the model has to be bigger in the minimum than in the required maximum. Therefore, both subsets of the condition set of the query contradict each other.

Finally, we have, if there are two non-overlapping and connected sequences of conditions $C'$ and $C''$, such that $C'$ fulfills the first inequality or both fulfill the second inequality, than the model set of $\query$ is empty, i.e. $\mods{\query} = \varnothing$.  
\end{proof}

\section{About Homomorphisms}\label{app:thm7}

\setcounter{restoreAppTheorem}{\value{theorem}}
\setcounter{theorem}{\value{Counter_Thm:queryContVsHom}}
\begin{theorem}\textbf{(restated)} \label{Appenix:homtheorem}
Given some sufficiently large $\TS$. Let $\query$ and $\query'$ be $(s,\ws,\WC)$-queries over $\VS$ and $\TS$.
If $\query$ and $\query'$ are satisfiable, it holds, that:
$$ \query \contResp{\Gamma} \query' \iff \query'\Hom\query.$$  
\end{theorem}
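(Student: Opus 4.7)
The plan is to prove the two directions separately, with the reverse direction relying on a canonical-model construction tailored to both \swggqueries{} and \dswgqueries.

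For the ``if'' direction ($\query'\Hom\query\Rightarrow\query\contResp{\TS}\query'$), suppose $h$ is a homomorphism with $h(s')=s$. Given any $\trace\models\query$ with witness $(\mu,e)$, the plan is to exhibit $(\mu\circ h,\,e)$ as a witness for $\trace\models\query'$: clearly $(\mu\circ h)(s')=\mu(h(s'))=\mu(s)\subseq_e\trace$, and since $\ws$ and $\WC$ are shared by the two queries the embedding $e$ still satisfies them. The only non-trivial point is that $\mu\circ h$ must be a well-defined substitution in the \dswgquery{} case: if a variable $z\in\Vars$ occurs at several positions of $s'$ and $h(z)$ is a typeset, then $\mu$ could in principle shrink $h(z)$ to different singletons at different positions, breaking the position-independence that substitutions enforce on variables. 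The additional clause in Definition~\ref{def:homomorphism} — that $h(z)$ is a singleton whenever $z$ recurs and is mapped into $\Pfinplus(\TS)$ — is precisely what rules this out. For \swggqueries{} the clause is vacuous because positions of $s$ are types or variables.

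For the ``only if'' direction I would build a canonical trace $\trace^*\in\mods{\query}$ whose type content rigidly mirrors $s$. Concretely, pick a fresh type $\alpha_x\in\TS\setminus(\types(\query)\cup\types(\query'))$ for each distinct variable $x$ of $s$; if $s[i]$ is a typeset $\chi$, pick a main letter from $\chi$, and crucially, pick \emph{distinct} representatives at distinct positions when $|\chi|>1$; if $s[i]$ is a single type, use it. This produces a skeleton word of length $\ell$. Between consecutive skeleton positions I insert the minimum number of pairwise distinct fresh filler types compatible with $\WC$ — for \dswgqueries{} this is just $\wc_i^-$ per gap, and for \swggqueries{} Lemma~\ref{cond:swggquery-compatibility} guarantees that a consistent simultaneous minimum configuration exists — and I arrange $|\trace^*|=\ws$. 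By construction $\trace^*\models\query$ via the obvious witness.

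By containment $\trace^*\models\query'$ with some witness $(\mu',e')$. The central rigidity claim is that $e'$ must hit exactly the $\ell$ skeleton positions of $\trace^*$ in order: filler types lie neither in $\types(\query')$ nor in any typeset of $s'$, so no typeset symbol of $s'$ can be mapped onto a filler, and the length/gap bookkeeping that saturates $\ws$ forces every gap of $e'$ to meet its minimum exactly, excluding variable symbols of $s'$ from absorbing fillers as well. Once $e'$ is pinned to the skeleton, I define $h$ at the level of symbols by $h(s'[i])\deff s[i]$. Consistency of $h$ on each variable of $s'$ follows from consistency of $\mu'$ combined with the distinctness of the $\alpha_x$ labels: if a variable $z'$ of $s'$ recurs at $i_1\neq i_2$ then $\mu'(z')$ forces the skeleton letters at $i_1,i_2$ to coincide, hence $s[i_1]=s[i_2]$. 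The auxiliary singleton clause of Definition~\ref{def:homomorphism} falls out precisely from the distinct-representatives choice above: if $h(z')$ were a typeset of size $>1$, its two distinct representatives could not both equal the common singleton $\mu'(z')$, contradicting $\trace^*\models\query'$.

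The main obstacle I expect is the rigidity argument in the \swggquery{} case. Range-$r$ constraints with $r>1$ couple distant positions of $s$ and allow ``hidden'' structure of $\query'$ to live inside the gaps of a match into $\trace^*$, which is exactly why the theorem requires $|\TS|$ to grow with this hidden complexity: with too few available filler types, embeddings $e'$ can reshuffle skeleton and filler positions and defeat the counting/distinctness argument. Quantifying how large $\TS$ must be (roughly: enough fresh fillers so that every range-$r$ constraint admits a unique minimum-length instantiation whose letters are pairwise distinguishable from the skeleton), constructing a small illustrative example with two overlapping range-$2$ constraints on a three-letter segment, and then verifying that once the bound is met $e'$ is pinned down uniquely, will be the technical heart of the proof.
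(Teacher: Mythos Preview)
Your ``if'' direction matches the paper's: both compose the witness substitution $\mu$ with $h$ and reuse the embedding $e$, and you correctly identify the singleton clause of Definition~\ref{def:homomorphism} as exactly what makes $\mu\circ h$ a legitimate substitution on recurring variables.

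For the ``only if'' direction you take a genuinely different route --- a single canonical trace $t^*$ --- whereas the paper (Propositions~\ref{thm:dswgquery-queryContVsHom} and~\ref{thm:swggquery-queryContVsHom}) builds the whole \emph{family} $\{t_\mu\}$ indexed by all substitutions $\mu$ of $\query$, forces the embedding to be the canonical $e$ on each $t_\mu$, and extracts the identity $\zeta_\mu(s'[i])=\mu(s[i])$ valid for \emph{all} $\mu$. For \swggqueries{} your single-trace idea (fresh pairwise-distinct $\alpha_x$ for each variable of $s$, fresh pairwise-distinct fillers) does go through and is arguably tidier than the paper's trace-modification arguments; the one correction is that $|t^*|$ must be the \emph{minimal} model length, not $\ws$: minimality is what pins the embedding, and $\ws$ may be $\infty$.

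For \dswgqueries, however, the single-trace approach has a real gap that you did not flag. Your inference ``skeleton letters coincide $\Rightarrow s[i_1]=s[i_2]$'' needs the skeleton-letter map to be injective on the symbols of $s$, and you only secure this for variables via the $\alpha_x$. When $s[i_1]$ and $s[i_2]$ are \emph{different} typesets sharing an element --- say $\{\ta\}$ and $\{\ta,\tb\}$ --- your representatives may both be $\ta$, and $t^*$ alone yields no contradiction. Likewise, from a single representative you cannot conclude $s[i]\subseteq s'[i]$ when $s'[i]$ is a typeset: you only learn that one element of $s[i]$ lies in $s'[i]$. Finally, your ``distinct representatives at distinct positions when $|\chi|>1$'' fails outright by pigeonhole once $\chi$ occurs at more than $|\chi|$ positions, so even the singleton-clause step breaks. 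The paper handles all three situations uniformly by picking a $\mu$ that separates the two positions (respectively, that selects the missing element) and invoking containment on $t_\mu$; this is also why its \dswgquery{} bound is only $|\TS|\geq 2$, whereas your construction would need many fresh types. To repair your argument in the \dswgquery{} case you will have to let the trace vary over substitutions, which is essentially the paper's method.
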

\setcounter{theorem}{\value{restoreAppTheorem}}

\begin{proof} The Theorem~\ref{thm:queryContVsHom} is an immediate consequence 
	of Proposition \ref{thm:dswgquery-queryContVsHom} for 
	\dswgqueries and 
	of Proposition~\ref{thm:swggquery-queryContVsHom} in the case of 
	\swggqueries. (Recall that \dswgqueries are satisfiable in any way.) 
\end{proof}

\begin{Proposition}\label{thm:dswgquery-queryContVsHom}
	Let $\query$ and $\query'$ be $(\ell,\ws,\wc)$-\dswgqueries over $\VS$ and $\TS$.
	\begin{enumerate}
		\item\label{item:QuerContVsHom_HomToQC} 
		If $\query'\Hom\query$ then 
		$\query \contResp{\Gamma} \query'$.
		\item\label{item:QuerContVsHom_QCToHom}
		Let $\TSalt\subseteq\TS$ be such that
		$|\TSalt| \geq 2$ and $\TSalt\supseteq\types(\query)$.
		If $\query \contResp{\Delta} \query'$ 
		then $\query'\Hom\query$.
	\end{enumerate} 
\end{Proposition}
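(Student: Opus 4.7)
The plan is to treat the two directions separately.

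For Part 1 (soundness), given a homomorphism $h: \query' \to \query$ and any $\trace \in \Mods{\Gamma}{\query}$ with witness $(\mu, e)$, I would construct a witness $(\mu', e)$ for $\trace \models \query'$ by pushing $\mu$ along $h$. Since $|s| = |s'| = \ell$ and both queries share the same $\ws$ and $\wc$, the same embedding $e$ remains valid. Concretely, at each position $i$ I set $\mu'(s'[i]) := \mu(s[i]) = \trace[e(i)]$: this is a single type, and when $s'[i]$ is a typeset, it lies in $s[i] = h(s'[i])$, which is a non-empty subset of $s'[i]$ by the definition of substitution. The only delicate point is that $\mu'$ must be position-independent on variables of $s'$. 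If $s'[i] = s'[j] = x$ with $i \neq j$, then $h(x) = s[i] = s[j]$; if $h(x)$ is a variable, the position-independence of $\mu$ on variables yields the same value; if $h(x)$ is a typeset, the extra condition in Definition~\ref{def:homomorphism} forces $h(x)$ to be a singleton, so $\mu$ again produces identical values.

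For Part 2 (completeness), I would use a canonical model construction. Set $L := \ell + \sum_{i=1}^{\ell-1} \wc_i^-$ and define the tight embedding $e_0$ by $e_0(i) := i + \sum_{j<i} \wc_j^-$, so that $e_0(\ell) = L$. A short calculation using $e(i{+}1) - e(i) \geq \wc_i^- + 1$ together with $1 \leq e(1)$ and $e(\ell) \leq L$ forces every matching embedding of an $(\ell,\ws,\wc)$-query on a trace of length $L$ to coincide with $e_0$. I would then pick a ``generic'' witness $(\mu_0, e_0)$ for $\query$ and build $\trace_0 \in \TSalt^+$ of length $L$ with $\trace_0[e_0(i)] := \mu_0(s[i])$. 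By the containment hypothesis, $\trace_0 \models \query'$, so there is a witness $(\mu', e_0)$ with $\mu'(s'[i]) = \mu_0(s[i])$ for every $i$. Setting $h(s'[i]) := s[i]$ then yields the candidate homomorphism; it remains to verify that distinct occurrences of a variable in $s'$ receive consistent images and that the extra singleton condition holds.

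The main obstacle is the genericity requirement on $\mu_0$: when $|\TSalt| = 2$, collisions $\mu_0(s[i]) = \mu_0(s[j])$ may occur even though $s[i] \neq s[j]$, which would make $h$ ambiguous whenever a variable of $s'$ appears at both positions $i$ and $j$. To bypass this, I would work with a family of canonical models rather than a single one: for each pair of positions $(i,j)$ that must be separated, construct a model $\trace_0^{(i,j)}$ in which $\mu_0$ assigns different types at positions $i$ and $j$, exploiting the position-dependence of substitution on typesets together with $|\TSalt| \geq 2$ to toggle between the available letters at individual positions. Consolidating the constraints on $\mu'$ across the family pins down $h(s'[i]) = s[i]$ unambiguously. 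The extra singleton property is handled by the same device: whenever a typeset $\chi$ in $s$ satisfies $|\chi| \geq 2$, choose $\mu_0$ so that $\mu_0(\chi)$ takes two distinct values at two positions where $\chi$ occurs; any variable of $s'$ repeating at those positions would force $\mu'$ to two distinct values, contradicting its position-independence unless $|\chi| = 1$.
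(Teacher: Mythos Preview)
Your proposal is correct and follows essentially the same route as the paper: for Part~1 both compose the witness substitution with $h$ and reuse the embedding, and for Part~2 both build the minimal-length canonical trace (so the embedding is forced to be $e_0$) and then vary the substitution to pin down $h(s'[i]):=s[i]$---your ``family of canonical models indexed by $(i,j)$'' is precisely the paper's quantification over all $\mu$ followed by picking a separating $\mu$ for each required constraint. One small case your Part~2 checklist omits: when $s'[i]$ is a typeset $\chi'$ you must also verify $s[i]\subseteq\chi'$ so that $h$ is a substitution on typesets; the paper dispatches this with the same device (pick $\mu$ sending $s[i]$ to some $\gamma\in s[i]\setminus\chi'$ and contradict $\zeta_\mu(\chi')\in\chi'$).
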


\begin{proof}
	For the proof
	 let $s$ and $s'$ be the query strings of $\query$ and $\query'$,
	respectively.
	\smallskip
	
	\noindent
	\eqref{item:QuerContVsHom_HomToQC}: 
	Let $h$ be a homomorphism from $\query'$ to $\query$. It is to show that $\query \contResp{\Gamma} \query'$. Let $t$ be arbitrary choosen from $\mods{\query}$.
	Our aim is to show that $\trace\in\mods{\query'}$.
	Therefore, we consider a witness $(\mu,e)$ for $\trace\models\query$. Recall, $e$ is an embedding $e : [\ell] \to [|t|]$ and $\mu$ is a substitution of size $\ell$, such that:
	\begin{enumerate}[1.)]
	\item For every position $i \in [\ell]$ where $s[i] \in \typesets(q)$, we have $t[e(i)] \in s[i]$. \footnote{Recall, we switch between elements of $\Gamma$ and singelton subsets of $\Gamma$ in the context of letters in a string.} 
	\item For every fixed variable $z \in \vars(q)$ and all positions $i_1, \ldots i_k \in [\ell]$, with $z=s[i_1]=\ldots=s[i_k]$, we have $\mu(i_1,s[i_1])= \ldots = \mu(i_k,s[i_k])= \mu(z) = t[e(i_1)] = \ldots =t[e(i_1)]$.
	\end{enumerate}
	Let $\mu'$ be defined via $\mu'(z)=\mu(h(z))$. We claim that $(\mu',e)$ is a witness for $\trace\models\query'$, 	i.e., $\mu'(s') \subseq_e t$.
	We already know that $(\mu,e)$ is a witness for $\trace\models\query$. 
	Hence, $e$ satisfies the global window size $\ws$ and the local gap-size constraints $\wc$, and for all $i\leq\ell$ we have $\mu(s[i])=\trace[e(i)]$.
	Consider an arbitrary $i\in[\ell]$. 
	We need to show that $\mu'(s'[i])=t[e(i)]$. 
	By our choice of $\mu'$ we have $\mu'(s'[i])=\mu(h(i,s'[i]))$.
	Since $h$ is a homomorphism, we have 
	\begin{enumerate}[1.)]
	\item for every variable $z'$ on position $i$ of $s'$ that is mapped via $h$ to a variable $z$ in $s$, that: \[ \mu'(z') = \mu'(s'[i]) = \mu(h(z'))= \mu(z) = t[e(i)] \]
	\item for every variable $z'$ that only occurs at one position $i$ in $s'$ and that is mapped via $h$ to a (non-empty) subset $\dc \finsubseteq \Gamma$, that: \[ \mu'(z') = \mu'(s'[i]) = \mu(i,h(z'))) = \mu(i, \dc) =  t[e(i)] \]
	\item for every variable $z'$ that occurs at least twice at positions $i_1, \ldots, i_k \in [\ell] $ in $s'$ and that therefore is mapped via $h$ to an element  $\gamma \in \Gamma$, that: 
	\begin{multline*}
 \mu'(z') = \mu'(s'[i_1]) = \ldots = \mu'(s'[i_k]) \\   =\mu(i_1,h(z'))) = \ldots = \mu(i_k,h(z')))  \\= \mu( \gamma) = \gamma =  t[e(i_1)]= \ldots=  t[e(i_k)] \end{multline*} and, finally
	\item for every (non-empty) $\dc' \finsubseteq \Gamma$  at position $i$ in $s'$ that is mapped via $h$ to an (non-empty) $\dc \subseteq \dc'$, that: \[ \mu'(i,\dc') = \mu'(s'[i]) = \mu(i,h(\dc'))) =  \mu(i, \dc) =  t[e(i)] \in \dc \subseteq \dc'  \] and therefore $t[e(i)] \in \dc'$. 
	\end{enumerate}
	In the end, it proves that $(\mu',e)$ is a witness for $\trace\in\mods{\query'}$.

	\smallskip
	
	\noindent
	\eqref{item:QuerContVsHom_QCToHom}:
	Let $\TSalt\subseteq\TS$ with $|\TSalt|\geq 2$ and $\TSalt\supseteq\types(\query)$, and let
	$\query \contResp{\TSalt} \query'$.
	Our aim is to show that $\query'\Hom\query$. We claim that $h(i,s'[i])=s[i]$ is an homomorphism from $\query'$ to \query.

	We fix an arbitrary $\type_0\in\TSalt$. For every $i<\ell$
	let $g_i$ be the
	``gap string'' consisting of $\wc^-_i$ copies of the symbol
	$\type_0$. Let
	\[
	\tilde{s} \ := \quad
	s[1]\ g_1 \ s[2]\ g_2 \ \cdots \ s[\ell{-}1]
	\ g_{\ell-1}\ s[\ell]\,.
	\]
	For each substitution $\mu : ([\ell] \times\VS \cup \Pfinplus(\TSalt)) \to \TS$  of size $\ell$ (recall, there are only singeltons in the 
co-domain of $\mu$ and every type of 
$\mu(s)$ is the unique element of its singleton) consider the trace $\trace_\mu\deff\mu(\tilde{s})$. 
	Obviously, we have $\trace_\mu\models \query$, as this is witnessed by $(\mu,e)$
	where $e(1)=1$, $e(2)=2+\wc^-_1$, \ldots,
	$e(j)=j+\sum_{i<j} \wc^-_i$ for all $j\in[\ell]$ (note, by assumption, we have $\TSalt\supseteq\types(\query)$). 
	
	Furthermore, we have $\query \contResp{\TSalt} \query'$, that implies $\trace_\mu\models \query'$. 
	Let $(\zeta_\mu,e_\mu)$ be a witness for $\trace_\mu\models\query'$. 
	Since $\trace_\mu$ has
	length exactly $\ell+\sum_{i<\ell} \wc^-_i$, there exists only one embedding that
	satisfies the local gap-size constraints $\wc$, namely the embedding $e$. I.e.,
	$e_\mu=e$. Furthermore, since $(\zeta_\mu,e)$ is a witness for
	$\trace_\mu\models\query'$, we know that
	$\zeta_\mu(s'[i])=\trace_\mu[e(i)]$ for all $i\leq\ell$.
	And by our choice of $\trace_\mu$ and $e$ we have $\trace_\mu[e(i)] =\mu(s[i])$ for
	all $i\leq \ell$. I.e.,
	\begin{equation}\label{eq:QueryContVsHom}
		\zeta_\mu(s'[i])
		\ = \ 
		\mu(s[i])
	\end{equation}
	for all  $i\leq \ell$ and all substitutions $\mu:([\ell] \times\VS \cup \Pfinplus(\TSalt)) \to \TS$.

	Now consider an arbitrary $i\in[\ell]$.\begin{enumerate}[\text{Case}~1:]
	\item  $s'[i]=\dc\finsubset\TS$.
	
	Then, by definition of $h$ we have $h(s'[i])=h(i,\dc)=s[i]$. 
	We have to show that $s[i]\subseteq s'[i]$.
	For contradiction, assume that $s[i]\not\subseteq s'[i]$, 
	hence there exists a type $\type\in s[i]\setminus s'[i]$.
	
	Then, let $\mu:([\ell] \times\VS \cup \Pfinplus(\TSalt)) \to \TS$ be a substitution with
	$\mu(s[i])=\type$.
	Then,  by \eqref{eq:QueryContVsHom}, we have
	$\mu(s[i]) = \zeta_\mu(s'[i]) = \type$.
	But due to the definition of substitution $\zeta_\mu(s'[i])\neq \type$, since $\type\not\in\dc$, contradicting our choice of $\mu$ and $\query \contResp{\TSalt} \query'$, respectively.
	
	\item $s'[i]\in\vars(\query')$. Let $x\deff s'[i]$. Again, by definition, we have $h(s'[i])=h(i,x)=s[i]$. We are done, iff $x$ occurs only once in $s'$. 	
	
	Otherwise, let $i_1, \ldots,  i_k$ be elements of $[\ell]$, such that $s'[i] =s'[i_1] = \ldots =s'[i_k] =  x $. 
	We have to show that \begin{quote}
	\begin{enumerate}[a)] 
		\item $s[i]=s[i_1] = \ldots = s[i_k] $. 
		\item if $s[i] \finsubseteq \Delta$ then $s[i]$ is a singelton. 
	\end{enumerate}\end{quote}
	
	\begin{description}
	\item[\textnormal{proof of a)}] For contradiction, assume that $s[j]\neq s[j']$ for some $j,j' \in \set{i,i_1,\ldots,i_k}$.
	Then, let $\mu:([\ell] \times\VS \cup \Pfinplus(\TSalt)) \to \TS$ be a substitution with 
	$\mu(s[j])\neq \mu(s[j'])$ 
	(such a substitution exists because $|\Delta|\geq 2$ and 
	$\Delta\supseteq\types(\query)$).
	Then, by \eqref{eq:QueryContVsHom}, we have
	$\zeta_\mu(s'[j])\neq \zeta_\mu(s'[j'])$,
	contradicting our choice of $\mu$ and $\query \contResp{\TSalt} \query'$, respectively.
	\item[\textnormal{proof of b)}] We already know, that $s[i]=s[i_1] = \ldots =  s[i_k] $. For contradiction, we assume that there are $\gamma_1 , \gamma_2 \in \Gamma$ such that  $\gamma_1 \neq \gamma_2$ and $\gamma_1 , \gamma_2 \in \dc \deff s[i_1] = s[i_2]$. Now again, let $\mu:([\ell] \times\VS \cup \Pfinplus(\TSalt)) \to \TS$ be a substitution with 
	$\gamma_1 = \mu(s[i_1])\neq \mu(s[i_2])= \gamma_2$. It holds that $\trace_\mu \models \query$, but for every $\trace'$ with $\trace'\models\query'$ and its witness $(\mu',e)$ we need $\mu'(i_1,x)=\mu'(i_2,x)$, so it contradicts $\query \contResp{\TSalt} \query'$. \qedhere
	\end{description}
	\end{enumerate}
\end{proof}

\begin{Proposition}\label{thm:swggquery-queryContVsHom}
	Let $\query$ and $\query'$ be $(\ell,\ws,\wC)$-\swggqueries over $\VS$ and $\TS$. Let $\query$ and $\query'$ be satisfiable.
	\begin{enumerate}
		\item\label{item:swggquery-QuerContVsHom_HomToQC} 
		If $\query'\Hom\query$ then $\types(\query')\subseteq \types(\query)$ 
		and $\query \contResp{\Gamma} \query'$  
		\item\label{item:swggquery-QuerContVsHom_QCToHom}
		Let $\TSalt\subseteq\TS$ be such that $|\TSalt| \geq |t|-\ell+|\types(q)|+1$ for $t \in \Mintrace{\TSalt}{\query}$ and 
		$\TSalt\supseteq\types(\query)$. Then the following is true.
		\[ \text{If $\query \contResp{\Delta} \query'$ then 
		$\query'\Hom\query$.} \]
	\end{enumerate}
\end{Proposition}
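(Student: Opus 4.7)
The plan is to prove both parts in parallel with Proposition~\ref{thm:dswgquery-queryContVsHom}, taking advantage of the fact that swgg-query strings carry only singletons (which makes the side condition in Definition~\ref{def:swggquery-homomorphism} vacuous) but accounting for the extra freedom that generalised gap-size constraints grant to embeddings. For part~(1), let $h$ be a homomorphism from $\query'$ to $\query$. Since every substitution fixes the types, $h(s')=s$ immediately gives $\types(\query')\subseteq\types(\query)$: any $\gamma$ at position $i$ of $s'$ forces $s[i]=h(\gamma)=\gamma$. For the containment, I would take any $t\in\mods{\query}$ with witness $(\mu,e)$ and set $\mu'\deff\mu\circ h$. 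Because $\query$ and $\query'$ share $(\ell,\ws,\wC)$, the embedding $e$ automatically satisfies the global window size and the gap-size constraints for $\query'$, and $\mu'(s'[i])=\mu(h(s'[i]))=\mu(s[i])=t[e(i)]$ by the witness property of $(\mu,e)$; a one-line case analysis settles variables versus types.

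For part~(2) the strategy is the canonical-model construction. I would first establish $\types(\query')\subseteq\types(\query)$ by contradiction: if some $\gamma\in\types(\query')\setminus\types(\query)$ existed, any model of $\query$ whose variable images and gap fillers all avoid $\gamma$ would witness $t\not\models\query'$, breaking containment. Next, fix $t\in\mintrace{\query}$ together with an embedding $e$ realising the match. For every substitution $\mu\colon\vars(\query)\to\TSalt$ whose image lies in $\TSalt\setminus\types(\query)$, build the trace $t_\mu$ by setting $t_\mu[e(i)]\deff\mu(s[i])$ at the $\ell$ query positions and filling the remaining $|t|-\ell$ gap positions with pairwise distinct fresh types drawn from $\TSalt\setminus(\types(\query)\cup\mu(\vars(\query)))$. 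The size bound $|\TSalt|\geq|t|-\ell+|\types(\query)|+1$ is precisely what makes such a choice feasible. Containment then yields a witness $(\zeta_\mu,e_\mu)$ for $t_\mu\models\query'$.

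The crucial rigidity step exploits two facts: the gap fillers lie outside $\types(\query)\supseteq\types(\query')$, so any position of $s'$ carrying a type must embed onto a query position of $t_\mu$; and varying $\mu$ over substitutions that separate chosen variables forces enough equalities of the form $\zeta_\mu(s'[i])=\mu(s[i])$ to pin down both the embedding and the relation between $s'$ and $s$. From here, defining $h(s'[i])\deff s[i]$ yields a well-defined substitution: two positions of $s'$ carrying the same variable must map to equal entries of $s$ (otherwise a separating $\mu$ would refute the equalities just derived), and types in $s'$ match literally. The homomorphism property $h(s')=s$ is then immediate, and the additional condition in Definition~\ref{def:swggquery-homomorphism} is vacuous since swgg-query strings contain only singletons.

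The main obstacle, and the point at which the argument diverges from the dswg case, is the rigidity claim. In the dswg setting a trace of minimum length admits exactly one embedding satisfying the local gap-size constraints, which makes $e_\mu=e$ automatic. For swgg-queries this uniqueness can fail: a constraint such as $(2,3,2)_1$ on a string of length $3$ already admits two tight embeddings on a length-$4$ trace. The way out is to exploit the pairwise distinct fresh gap fillers: two different tight embeddings would read off genuinely different labels at the ``flexible'' gap positions, and the witness $\zeta_\mu$ distinguishes them as soon as $\TSalt$ contains enough labels. This is exactly why the quantitative bound $|\TSalt|\geq|t|-\ell+|\types(\query)|+1$ is phrased in terms of the minimum trace length, rather than being a mere constant as in the dswg case.
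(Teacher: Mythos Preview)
Your outline for part~(1) is correct and coincides with the paper's proof, which simply invokes the argument of Proposition~\ref{thm:dswgquery-queryContVsHom}\eqref{item:QuerContVsHom_HomToQC} specialised to singleton typesets.

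For part~(2) your plan has the right architecture---canonical minimum-length trace, pairwise distinct fresh gap fillers to recover rigidity, then read off $h(s'[i])=s[i]$---but there is a genuine quantitative tension that you have not resolved. You require simultaneously that (a) $\mu$ has image in $\TSalt\setminus\types(\query)$, (b) the $|t|-\ell$ gap fillers are pairwise distinct and drawn from $\TSalt\setminus(\types(\query)\cup\mu(\vars(\query)))$, and (c) you can ``vary $\mu$ over substitutions that separate chosen variables''. With the stated bound $|\TSalt|\geq |t|-\ell+|\types(\query)|+1$, condition~(b) forces $|\mu(\vars(\query))|\leq 1$, i.e.\ $\mu$ must be constant on $\vars(\query)$; a constant $\mu$ cannot separate two distinct variables of $s$, so~(c) is unavailable. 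Your proposal would go through with a bound that is larger by one, but not with the bound as written.

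The paper avoids this by organising the budget differently: it fixes a single fresh base type $\gamma_0$ and lets $\mu$ range over $\types(\query)\cup\{\gamma_0\}$ (so the image of $\mu$ never competes with the $|t|-\ell$ gap fillers), and then handles the ``separation'' cases not by varying $\mu$ but by \emph{modifying the trace}---replacing the relevant symbol at the positions $e(i')$ for $i'\in\Pos{s}{y}$ and deriving a contradiction to $\query\contResp{\TSalt}\query'$. This trace-modification device is also what drives the rigidity case $s'[i]\in\TS$ (rather than your preliminary containment $\types(\query')\subseteq\types(\query)$), and it is exactly what makes the stated bound tight. If you want to keep your separating-$\mu$ strategy you will need to either relax the size hypothesis or show that at the critical step only one fresh label is needed at a time; otherwise, switching to the paper's trace-modification argument repairs the gap without altering your overall structure.
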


\noindent Before we start to prove Proposition~\ref{thm:swggquery-queryContVsHom}, let us consider the size of $\TSalt$ and thereby $\Gamma$. For every query $\query$ we defined \[\Mintrace{\TSalt}{\query} \deff \left\{ \trace\in\TSalt^+ \ :\ 
        \begin{array}{l}\trace\models\query \text{ and } \\
        \text{there exists no } \trace'\in\TSalt^+ \text{ with }  
            \trace'\models\query \text{ and } |\trace'|<|\trace| \end{array}
    \right\}.\]
\begin{Example} \label{ex:Prop12}
	If $|\Gamma|$ is big enough, then two query strings $s$ und $s'$ provide a
	good intuition, whether $q  \contResp{\Gamma} \query'$ holds, or not. 
	But if the size of $\Gamma$ is small and the queries string size is small 
	compared to the size of traces $t \in \Mintrace{\Gamma}{q}$ then 
	containment may hold since the essence of $\query'$ can be hidden in the 
	gaps of $s$, while there exists no homomorphism from $\query'$ to $\query$.
	
	Let $\Gamma=\set{\ta, \tb , \tc}$. Let $\query = (s,w,C)$ and 
	$\query' = (s',w,C)$ for $C=\set{(5,5,3)_1}$ be \swggqueries{} with  
		$s=\ta\tb\tc\ta $ 
	and 
		$s' = \ta x x \ta$. 
	Caused by the condition $(5,5,3)_1$, we have 
		$|\trace|=7$
	for every $\trace \in \Mintrace{\Gamma}{\query}$. 
	Hence, there exist $i$ and $j \in \set{2,\ldots,6}$, $i \neq j$, with 
		$\trace[i]=\trace[j]$,
	since we only have three symbols in $\TS$ to fill up the gap. 
	Therefore, for every $\trace \in \Mods{\TS}{q}$ it holds that 
		$\trace \in \Mods{\TS}{q'}$,
	but there exists no homomorphism $h$ from $\query'$ to $query$.
	To avoid this repetition of types within the gap and to ensure that 
		$\query \not\contResp{\Delta} \query'$, 
	three (instead of one) additional types are needed, apart from $\tb$ and 
	$\tc$. 

\end{Example}
\begin{proof}[of Proposition~\ref{thm:swggquery-queryContVsHom}]

	For the proof 
	 let $s$ and $s'$ be the query 
    strings	of $\query$ and $\query'$, respectively.
    \medskip

	\noindent
	\eqref{item:swggquery-QuerContVsHom_HomToQC}:
		The proof of \eqref{item:swggquery-QuerContVsHom_HomToQC} is exactly the proof of Proposition~\ref{thm:dswgquery-queryContVsHom}\eqref{item:QuerContVsHom_HomToQC} for the special case that the homomorphism maps directly to elemens of $\Gamma$.
    \medskip

	\noindent
	\eqref{item:swggquery-QuerContVsHom_QCToHom}:   
	Let  $\query$ and $\query'$ be satisfiable and let $\TSalt\subseteq\TS$ be such that $|\TSalt| \geq |t|-\ell+|\types(q)|+1$ for $t \in \Mintrace{\TSalt}{\query}$ and 
		$\TSalt\supseteq\types(\query)$.

	Our aim is to show that $\query'\Hom\query$. We claim that $h(s'[i])=s[i]$ is an homomorphism from $\query'$ to \query.
	
	We fix an arbitrary $\type_0\in\TSalt$ that does not occur in $s$.  
	Let $\TSalt_{\nS} \deff \TSalt \setminus \set{\types(q) \cup \set{\type_0}}$ 
	be the set of at least $|t|-\ell$ new types. We call the set  
		$\TSalt_{\base} \deff \TSalt  \setminus \TSalt_{\nS}$ 
	the set of base types. 
	Next, let $t_{\mathsf{min}}$ be an arbitrary element from (the non-empty set) $\Mintrace{\TSalt}{q}$. We consider the following string from $\TSalt^{+}$ of length $|t_{\mathsf{min}}|$:
	\[
	\tilde{\trace} \ := \quad
	s[1]\ g_1 \ s[2]\ g_2 \ \cdots \ s[\ell{-}1]
	\ g_{\ell-1}\ s[\ell]\,
	\] where $s$ is a subsequence of $\tilde{t}$ with embedding $e$ that 
	satisfies the set of generalised gap-size constraints $\wC$ and 
	$g_1, \ldots, g_{\ell-1}$ are strings over $\TSalt^+_\nS$ in a way, such 
	that any symbol $\type \in \TSalt^+_\nS$ occurs only once in 
	$g_1 \dots g_{\ell-1}$. 
	This is possible since for $t \in \Mintrace{\TSalt}{q}$ we have 
	\begin{align*}
		|\TSalt^+_\nS| &= |\TSalt| -( |\types(q)| +1)\\ 
		&\ge |t|-\ell+|\types(q)|+1  -( |\types(q)| +1) \\ 
		&=  |t|-\ell = \sum\limits_{i \in [\ell-1]} |g_i|.
	\end{align*} 
	Moreover, we choose the positions of $s[i]$ for all $i \in \set{2, \ldots, \ell -1}$ as left as possible with respect to the constraints $\wC$.

	Now, we choose an arbitary substitution 
		$\mu:\vars(s)\cup\TSalt_\base\to\TSalt_\base$.  
	Note, $\mu$ is an substitution since we have $\types(q) \subset \TSalt_\base$ and no letter of $s$ is mapped to a new symbol in $\TSalt_\nS$.
	
	Next, we consider the trace $\trace_\mu\deff\mu(\tilde{\trace})$. 
	Then $\trace_\mu\models \query$ holds, as this is witnessed by $(\mu,e)$. 
	By assumption, we have
        $\query \contResp{\Gamma} \query'$.  Hence, $\trace_\mu\models \query'$.
	Our choice of $t$ ensures that there is a $\zeta$ such that $(\zeta,e)$ witnesses
	$\trace_\mu\models\query'$.

	\begin{quote}
	Assume not. 
	Let $(\zeta',e')$ with $e'\neq e$ be a witness of $\trace_\mu\models \query'$. 
	This implies that at least one $s'[i]$ is mapped to a type $\type \in \TSalt_\nS$ via $\zeta'$, since it must be mapped into a gap string and can not be mapped to $\mu(s[i])$. Since $e$ satifies the conditions $\wC$, this can be caused by two reasons:
	\begin{itemize}
		\item 	$s'[i]=\type$. Then we obtain $\tilde{\trace}'$ from $\tilde{\trace}$ by replacing $\type$ by any arbitray $ \type'\neq\type  \in \TSalt$. 
			Again,  $\mu(\tilde{\trace}')$ contains no $\type$ and  $\mu(\tilde{\trace}') \models \query$. 
			That contradicts $\query \contResp{\TSalt} \query'$, as every trace $t$ with $t \models q'$ must have a position $j$ such that $t[j]=\type$.
		\item There exist an $x \in \vars(q')$ and some $j \in [\ell]$, such that  $s'[i]=s'[j]=x$. 
			First we remark that $s'[j]$ can not be mapped to some type $\type'$ from a gap string, cause all positions in gap strings of $\tilde{\trace}$ are pairwise disjoint by construction. 
			On the other hand, no $s'[j]$ can be mapped via $\zeta'$ to a position of $t_\mu$ obtained by some $\mu(s[i])$, since $\mu(s[i]) \in \TSalt_\base$. Both together implies $\zeta'(s'[i])\neq \zeta'(s'[i])$ but $s'[i]=s'[j]=x$, indicating that $\zeta'$ is not an substitution.
	\end{itemize}
	The remaing case that $s'[i]=x \in \vars(q')$ and $x$ occurs only once in the query string $s'$ can not prevent a witness with embedding $e$, hence we can choose $\zeta(x)= \zeta(s[i])$.
	\end{quote}
	\noindent
	The fact that $(\mu,e)$ witnesses $\trace_\mu \models \query$ and $(\zeta, e) $ witnesses $\trace_\mu \models \query'$ implies that 
	\begin{equation}\label{eq:QueryContVsHomElf}
	\mu(s[i]) = \zeta(s'[i]), \text{\quad \quad  for all $i\in[\ell]$.}
	\end{equation}
	Now consider an arbitrary $i\in[\ell]$ and recall that we want to prove that
	$h(s[i])=s'[i]$.
	
	\begin{enumerate}[\text{Case}~1:]
		\item $s'[i]\in\TSalt$. Let $\type\deff s'[i]$. Precisely, we have $\type\in\TSalt_\base$. 	
			Then, by definition of $h$ we have $h(s'[i])=h(\type)=\type$. 
			We have to show that $s[i]=\type$.
			For contradiction, assume that $s[i]\neq\type$. That implies:
			\begin{enumerate}[a)]
				\item If $s[i] = \type' \in \TSalt$ then $\mu(s[i]) = \mu(\type') = \type' \neq \type = \zeta(\type') = \zeta([s'[i]])$, contradicting~(\ref{eq:QueryContVsHomElf}).
				\item If $s[i] = x  \in \vars(q) $, then let $\Pos{s}{x}$ be the set of all positions $j \in [\ell]$ such that $s[j]=x$. We obtain $\tilde{\trace}'$ from $\tilde{\trace}$ by replacing $\type$ by any arbitray $ \type'\neq\type  \in \TSalt$ at all positions $e(j)$ for $j \in \Pos{s}{x}$. 
					Again, $\mu(\tilde{\trace}')$ does not contain $\type$ and  $\mu(\tilde{\trace}') \models \query$. That contradicts $\query \contResp{\TSalt} \query'$, as every trace $t$ with $t \models q'$ must include a position $j$ such that $t[j]=\type$.
			\end{enumerate}
	
		\item  $s'[i]\in\vars(\query')$. Let $x\deff s'[i]$. Again, by definition, we have $h(s'[i])=h(x)=s[i]$. We are done, if $x$ occurs only once in $s'$. 	
	
	Otherwise, let $i_1, \ldots, i_k$ be elements of $[\ell]$, such that $s'[i] =s'[i_1] = \ldots = s'[i_k] = x $. By definition of substitutions, we have $\zeta(s'[i])  = \zeta(s'[i_1]) = \ldots = \zeta(s'[i_k])$. We have to show that \[ s[i]=s[i_1] = \ldots= s[i_k] .\]

	 For contradiction, assume that $s[j]\neq s[j']$ for some $j,j' \in \set{i,i_1,\ldots,i_k}$.
		\begin{enumerate}[a)]
			\item Having $\TSalt \ni \type = s[j] \neq s[j'] = \type' \in \TSalt$ contradicts~(\ref{eq:QueryContVsHomElf}) since it implies 
			$\type = \mu(\type) = \mu(s[j]) \neq \zeta(s[j'])= \zeta(\type') = \type'$.
			\item Let $\vars(q) \ni y = s[j]  \neq s[j']  \in \TSalt \cup \vars(q)$.  Then let $\Pos{s}{y}$ be the set of all positions $i' \in [\ell]$ where $s[i']=y$.
				We obtain $\tilde{\trace}'$ from $\tilde{\trace}$ by replacing $\mu(s[j])$ by some arbitray $ \type'\neq \mu(s[j'])  \in \TSalt$  at all positions $e(i')$ for $i' \in \Pos{s}{y}$. 
				Again, $\mu(\tilde{\trace}')$ contains no $\type$ and  $\mu(\tilde{\trace}') \models \query$. That contradicts $\query \contResp{\TSalt} \query'$, as every trace $t$ with $t \models q'$ must have an position $j$ such that $t[j]=\type$.
		\end{enumerate}	 
	\end{enumerate}
	
	\noindent This completes the proof of Proposition~\ref{thm:swggquery-queryContVsHom}.

	\noindent 

	As a remark, we state that the choice of the size 
		$|\TSalt| \deff \max(1,\types(\query)) + |\trace| - \ell$
	for $\trace\in\min\Mintrace{\TSalt}{\query}$ is worst case minimal,
	therefore consider Example~\ref{ex:Prop12}. 
	\end{proof}

\section{About Isomorphisms}\label{app:iso}

\nc{\dcInPotTS}{\ensuremath{\dc \in \Pfinplus(\TS)}}
\nc{\InPotTS}[1]{\ensuremath{#1\in \Pfinplus(\TS)}}

\begin{Definition}\label{def:isomporphism}
    Two $(\ell,\ws,\WC,k)$-queries $\query=(s,\ws,\WC)$ and 
    $\query'=(s',\ws,\WC)$ are called \emph{isomorphic} (denoted by 
        $\query \cong\query'$)
    if there is a bijection 
        $\pi:(\vars(\query)\cup\Pfinplus(\TS))
         \rightarrow
         (\vars(\query')\cup\Pfinplus(\TS))$
    such that 
        $\pi(s[i]) = s'[i]$
    for all $i\in[\ell]$ and  $\pi_{|_{\Pfinplus(\TS)}} = \textnormal{id}$.
\end{Definition}

\begin{Corollary}\label{cor:Iso-Hom-Mod}
    Given some sufficiently large $\TS$. For all $(\ell,\ws,\WC,k)$-queries $\query$ and $\query'$
    over $\TS$ and $\Vars$ we have: 
    $$ 
        \query\cong\query' \Longleftrightarrow 
        \bigl( \query\Hom\query' \text{ and } \query'\Hom\query \bigr)
        \Longleftrightarrow 
        \Mods{\TS}{\query} = \Mods{\TS}{\query'}
    $$
\end{Corollary}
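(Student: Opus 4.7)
My plan is to derive both equivalences from Theorem~\ref{thm:queryContVsHom} combined with a brief structural analysis of homomorphism compositions. The second equivalence is almost immediate: apply Theorem~\ref{thm:queryContVsHom} twice, once in each direction. The existence of $\query'\Hom\query$ is equivalent to $\query\contResp{\TS}\query'$, and the existence of $\query\Hom\query'$ is equivalent to $\query'\contResp{\TS}\query$; the conjunction of these two inclusions is exactly $\Mods{\TS}{\query}=\Mods{\TS}{\query'}$.

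The forward direction of the first equivalence is routine: given an isomorphism $\pi$ from $\query$ to $\query'$, the map $\pi$ is itself a substitution with $\pi(s)=s'$. Since $\pi$ restricts to the identity on $\Pfinplus(\TS)$, no variable of $\query$ is mapped to a typeset, so the homomorphism side-condition holds vacuously. Hence $\query\Hom\query'$, and symmetrically $\pi^{-1}$ witnesses $\query'\Hom\query$.

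The substantive part is the backward direction. Suppose $h:\query'\Hom\query$ and $h':\query\Hom\query'$. The key observation is that $h\circ h'$ is a substitution satisfying $h(h'(s))=h(s')=s$, so it fixes $s$ position by position; symmetrically $h'\circ h$ fixes $s'$. I would analyse this position-wise. Where $s[i]=\dc\in\Pfinplus(\TS)$, the substitution rule forces $h'(\dc)\subseteq\dc$ and $h(h'(\dc))\subseteq h'(\dc)$; the sandwich $\dc=h(h'(\dc))\subseteq h'(\dc)\subseteq\dc$ forces $h'(\dc)=\dc$ and $h(\dc)=\dc$, so in particular $s'[i]=\dc=s[i]$. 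Where $s[i]=x\in\vars(\query)$, the image $h'(x)$ cannot be a typeset (otherwise $h(h'(x))$ would remain a typeset and could not equal the variable $x$), so $h'(x)\in\vars(\query')$ and $h(h'(x))=x$. Running the symmetric argument on $s'$, I obtain that $h'|_{\vars(\query)}$ and $h|_{\vars(\query')}$ are mutually inverse bijections between the variable sets. Extending $h'|_{\vars(\query)}$ by the identity on $\Pfinplus(\TS)$ yields a bijection $\pi$ with $\pi(s[i])=s'[i]$ for all $i$, i.e.\ an isomorphism. The main obstacle is precisely the case analysis at variable positions, where I must exclude that $h'$ maps a variable to a typeset; this uses the fact that substitutions never enlarge typesets nor convert them back into variables, so a typeset image would survive through $h$ and contradict $h(h'(x))=x$. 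Once this is pinned down, bijectivity and the matching property of $\pi$ follow directly.
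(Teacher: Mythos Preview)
Your proof is correct. The second equivalence and the forward direction of the first equivalence match the paper exactly. For the backward direction, your route differs slightly in packaging from the paper's: the paper fixes one homomorphism $h$ and argues directly that it is injective and surjective (using the other homomorphism $h'$ in each case of the analysis), thereby concluding that $h$ itself witnesses the isomorphism. You instead exploit the compositions $h\circ h'$ and $h'\circ h$ acting as identities on $s$ and $s'$, deduce that $h'|_{\vars(\query)}$ and $h|_{\vars(\query')}$ are mutually inverse bijections between the variable sets, and then build the isomorphism $\pi$ by extending $h'|_{\vars(\query)}$ with the identity on $\Pfinplus(\TS)$. Both arguments rest on the same underlying idea (two-sided homomorphisms force bijectivity), but yours is a little more explicit about why typeset positions must agree: your sandwich $\dc=h(h'(\dc))\subseteq h'(\dc)\subseteq\dc$ spells out what the paper simply asserts when it writes ``$h(\dc)=h'(\dc)=\dc$ by definition'', which for \dswgqueries{} is not literally immediate from the substitution definition (only $h(\dc)\subseteq\dc$ is). So your argument is not merely equivalent but in fact tidier at that one point.
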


\begin{proof}
    Since we assume $|\TS|\geq 2$ in the case of \dswgqueries, or $|\TS|\geq |t|-|s|+|\types(q)|+1$ in the case of \swggqueries, respectivly, 
    the equivalence 
    $\bigl( \query\Hom\query' \text{ and } \query'\Hom\query \bigr)
    \Longleftrightarrow 
    \Mods{\TS}{\query} = \Mods{\TS}{\query'}$
    is a direct consequence of Theorem~\ref{thm:queryContVsHom}.
    \par 
    If $\query\cong\query'$, then there is a bijection 
    $\pi:(\vars(\query)\cup\Pfinplus(\TS))
         \rightarrow
         (\vars(\query')\cup\Pfinplus(\TS))$
    such that 
        $\pi(\dc) = \dc$ for all $\dcInPotTS$
    and 
        $\pi(s[i]) = s'[i]$
    for all $i\in[\ell]$. By definition, $\pi$ is also a homomorphism from 
    $\query$ to $\query'$, and $\pi^{-1}$ is a homomorphism from $\query'$ to
    $\query$ (note that since $\pi(\dc)=\dc$ for all $\dcInPotTS$ 
    and $\pi^{-1}$ is injective, $\pi^{-1}$ is also a substitution). Hence, 
    $\query\Hom\query'$ and $\query'\Hom\query$.
    \par 
    It remains to prove that 
        $\bigl( \query\Hom\query' \text{ and } \query'\Hom\query \bigr)$
    implies 
        $\query\cong\query'$.
    Let 
        $h: (\Vars \cup \Pfinplus(\TS))\to (\Vars\cup\Pfinplus(\TS))$
    and 
        $h': (\Vars \cup \Pfinplus(\TS))\to (\Vars\cup\Pfinplus(\TS))$
    be homomorphisms from $\query$ to $\query'$ and from $\query'$ to $\query$,
    respectively. (Note that we can omit the index $\ell$ and the parameter $i$
    for the position becasue it is given by the context.)
    By definition, this means that 
        $h(\dc) = h'(\dc) = \dc$ 
    for all $\dcInPotTS$, and 
        $h(s[i]) = s'[i]$ and $h'(s'[i]) = s[i]$
    for all $i\in[\ell]$.
    We claim that $h$ actually witnesses $\query\cong\query'$, i.e. $h$ is a 
    bijection
        $(\vars(\query)\cup\Pfinplus(\TS))
        \rightarrow
        (\vars(\query')\cup\Pfinplus(\TS))$
    such that 
        $h(\dc) = \dc$
    for all $\dcInPotTS$, and 
        $h(s[i]) = s'[i]$
    for all $i\in[\ell]$.
    \par 
    We already observed that 
        $h(\dc) = \dc$ 
    for all $\dcInPotTS$, and
        $h(s[i]) = s'[i]$ 
    for all $i\in[\ell]$ holds. Thus, it only remains to prove that $h$ is a 
    bijection.
    \par 
    Let $x,y\in\vars(\query)\cup\Pfinplus(\TS)$ with 
    $x\neq y$ and $h(x) = h(y)$. 
    If $\InPotTS{x,y}$, then
        $h(x) = x \neq y = h(y)$,
    which contradicts $h(x) = h(y)$.
    If $x\in\vars(\query)$ and $y=\dcInPotTS$,  
    then there exists a $p\in[\ell]$ with $s[p]=x$ and, since $h(s[p]) = s'[p]$ 
    and
        $h(s[p]) = h(x) = h(y) = \dc$,
    we have $s'[p] = \dc$, which contradicts $h'(s'[p]) = s[p] = x$.
    The case where $\InPotTS{x}$ 
    and $y\in\vars(\query)$ can be dealt with analogously.
    If $x,y\in\vars(\query)$, then there are $p,r\in[\ell]$ with 
        $s[p] = x$ and $s[r] = y$.
    Since $h(s[p]) = s'[p]$, $h(s[r]) = s'[r]$ and $h(s[p])=h(s[r])$ by our 
    assumption it holds that $s'[p] = s'[r]$.
    However, this implies that 
        $h'(s'[p]) = h'(s'[r])$,
    i.e. $s[p] = s[r]$, which contradicts the assumption that $x\neq y$.
    Consequently, $h$ is injective.
    \par 
    In order to prove that $h$ is surjective, let 
        $x\in\vars(\query')\cup\Pfinplus(\TS)$.
    If $x = \dcInPotTS$ 
    then $h(\dc)=\dc$ by the definition of $h$ and since $h'$ exists as well.
    If $x\in\vars(\query')$, then there exists a position $p\in[\ell]$ with 
        $s'[p] = x$.
    Since $h$ is a homomorphism from $\query$ to $\query'$, it satisfies    
        $h(s[p]) = s'[p] = x$,
    which means that there exists 
        $y\in \vars(\query)\cup\Pfinplus(\TS)$ 
    with $h(y) = x$ and $y=s[p]$.
    Consequently, $h$ is surjective. Finally, we have shown that $h$ is injective and surjective, and therefore $h$ is a bijection. 
\end{proof}

\begin{Definition}\label{def:partIsom}
    Let $\query=(s,\ws,\WC)$ and $\query'=(s',\ws,\WC)$ be two 
    $(\ell,\ws,\WC,k)$-queries and $I\subset[\ell]$. We say that 
    $\query$ is partially isomorphic to $\query'$ w.r.t $I$
    (denoted by $\query \sim_{I}\query'$) if, and only if 
    \begin{enumerate}
        \item\label{def:partIsom-i} for all $i,j\in I$ we have:
            \[ s[i]=s[j] \Leftrightarrow s'[i]=s'[j] , \qquad \text{ and}\]
        \item\label{def:partIsom-ii}  for all $i\in I$ we have:
            \[
                \InPotTS{s[i]}
                \Leftrightarrow 
                \InPotTS{s'[i]}
        			\qquad \text{ and  } \qquad 
              \InPotTS{s[i]} \Rightarrow s[i] = s'[i]. \]
    \end{enumerate}
\end{Definition}

\begin{Lemma}\label{lem:SimToIsom}
    For all $(\ell,\ws,\WC,k)$-queries $\query$ and $\query'$ we have:
    $$\query \sim_{[\ell]}\query'\ \Longleftrightarrow\ \query\cong\query'.$$
\end{Lemma}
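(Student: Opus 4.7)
The plan is to unpack the two definitions and show each direction separately. For the direction $q \sim_{[\ell]} q' \Rightarrow q \cong q'$, I would construct the bijection $\pi$ explicitly, and for the converse direction I would simply read off (i) and (ii) from the existence of a bijection that is the identity on $\Pfinplus(\TS)$.

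For $q \sim_{[\ell]} q' \Rightarrow q \cong q'$: I would define $\pi : \vars(q) \cup \Pfinplus(\TS) \to \vars(q') \cup \Pfinplus(\TS)$ by $\pi(\chi) \deff \chi$ for every $\chi \in \Pfinplus(\TS)$ and, for $x \in \vars(q)$, by choosing any position $i \in [\ell]$ with $s[i] = x$ and setting $\pi(x) \deff s'[i]$. The first key step is to verify that this is well-defined: by condition (ii) of $\sim_{[\ell]}$ applied to $i$, $s'[i]$ is a variable (not a typeset); by condition (i), if $x$ also occurs at position $j$, then $s[i] = s[j]$ forces $s'[i] = s'[j]$, so the chosen image does not depend on the representative position. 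The second step is to verify that $\pi(s[i]) = s'[i]$ for every $i \in [\ell]$: if $s[i]$ is a typeset, condition (ii) gives $s[i] = s'[i]$ and $\pi(s[i]) = s[i] = s'[i]$; if $s[i]$ is a variable, this holds by construction.

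Next I would check that $\pi$ is a bijection. Injectivity on $\Pfinplus(\TS)$ is the identity, and variables are never sent to typesets (by the well-definedness step above), so it suffices to argue injectivity on $\vars(q)$: if $\pi(x_1) = \pi(x_2)$, pick witnesses $i_1,i_2$ with $s[i_1] = x_1$, $s[i_2] = x_2$, so $s'[i_1] = s'[i_2]$, and condition (i) of $\sim_{[\ell]}$ yields $s[i_1] = s[i_2]$, i.e.\ $x_1 = x_2$. For surjectivity onto $\vars(q')$, given $x' \in \vars(q')$ choose $i$ with $s'[i] = x'$; by condition (ii), $s[i]$ is a variable $x$, and $\pi(x) = x'$.

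For the converse $q \cong q' \Rightarrow q \sim_{[\ell]} q'$: fix a witnessing bijection $\pi$. Condition (i) follows from $\pi(s[i]) = s'[i]$ together with injectivity of $\pi$, since $s[i] = s[j]$ iff $\pi(s[i]) = \pi(s[j])$ iff $s'[i] = s'[j]$. For condition (ii), observe that because $\pi$ is a bijection that is the identity on $\Pfinplus(\TS)$, it must map $\vars(q)$ into $\vars(q')$ (otherwise some typeset would have two preimages); hence $s[i] \in \Pfinplus(\TS)$ iff $s'[i] = \pi(s[i]) \in \Pfinplus(\TS)$, and in that case $s'[i] = \pi(s[i]) = s[i]$. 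The main (minor) obstacle is simply justifying well-definedness of $\pi$ on variables in the forward direction; once this is done, everything else is essentially a translation between the two formalisations.
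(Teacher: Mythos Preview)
Your proposal is correct and follows essentially the same approach as the paper: in the forward direction you build $\pi$ as the identity on typesets and as $x \mapsto s'[i]$ for any witnessing position $i$, then verify injectivity and surjectivity via condition~(i) and the typeset/variable split via condition~(ii); in the backward direction you read off the two conditions directly from the bijection. The paper does the same, choosing specifically $i_x := \min\{i : s[i]=x\}$ rather than an arbitrary witness; your version is actually a little more explicit about well-definedness and about why variables are sent to variables, points the paper leaves implicit.
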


\begin{proof}
    Assume that $\query\cong\query'$. Then there exists a bijection 
        $\pi:(\vars(\query)\cup\Pfinplus(\TS))
        \rightarrow
        (\vars(\query')\cup\Pfinplus(\TS))$
    such that 
        $\pi(\dc) = \dc$ for all $\dcInPotTS$ 
    and 
        $\pi(s[i]) = s'[i]$
    for all $i\in[\ell]$.
    Since $\pi$ is the identity on $\Pfinplus(\TS)$, it 
    holds for all $i\in[\ell]$ with $\InPotTS{s[i]}$
    that $s[i] = \pi(s[i]) = \InPotTS{s'[i]}$.
    This also implies $s'[i]=s'[j]$ for all $i,j\in I$ with 
        $s[i]=s[j]$
    and 
        $\InPotTS{s[i],s[j]}$.
    For all $i,j\in I$ with $s[i]=s[j]$ and $s[i],s[j]\in\vars(\query)$ it holds 
    that 
        $s'[i],s'[j]\in\vars(\query')$.
    Since $\pi$ is injective, $\pi(s'[i]) = \pi(s'[j])$ holds, which in turn 
    implies $s'[i] = s'[j]$.
    \par
    For direction "$\Longrightarrow$" we have $\query\sim_{[\ell]}\query'$ by 
    assumption. Let $s$ and $s'$ be the query strings of $\query$ and $\query'$,
    respectively. For every $x\in\vars(\query)$ let 
        $i_x\deff \min\{ i\in[\ell] \ :\  s[i]=x\}$.
    Define 
        $\pi:
        (\vars(\query)\cup\Pfinplus(\TS))
        \rightarrow
        (\vars(\query')\cup\Pfinplus(\TS))$
    via 
        $\pi(\dc)=\dc$ for all $\dcInPotTS$ 
    and 
        $\pi(x) = s'[i_x]$ for all $x\in\vars(\query)$.
    \par 
    First, note that $\pi$ is injective: Consider $x,y\in\vars(\query)$ with 
    $\pi(x) = \pi(y)$. Then, $s'[i_x] = s'[i_y]$. By item~\ref{def:partIsom-i}
    we obtain $s[i_x]=s[i_y]$, i.e. $x=y$.
    \par 
    Furhtermore, $\pi$ is surjective: Consider an arbitrary 
    $y\in\vars(\query')$. Let 
        $j_y\deff \min\{ j\in[\ell] \ :\  s'[i]=y\}$.
    Let $x\deff s[j_y]$. Then, $x = s[i_x] = s[j_y]$, and by 
    item~\ref{def:partIsom-i} of Definition~\ref{def:partIsom} we obtain that 
    $s'[i_x] = s'[j_y]$. Hence, $y = s'[j_y] = s'[i_x] = \pi(x)$.
    \par 
    In summary, $\pi$ is a bijection from 
    $(\vars(\query)\cup\Pfinplus(\TS))$ to 
    $(\vars(\query')\cup\Pfinplus(\TS))$ with 
    $\pi(\dc)=\dc$ for all $\dcInPotTS$.  
    It remains to prove that $\pi(s[i]) = s[i]$ for all $i\in[\ell]$.
    Consider an arbitrary $i\in[\ell]$. 
    If $\InPotTS{s[i]}$ then, by 
    item~\ref{def:partIsom-ii} of Definition~\ref{def:partIsom} and the 
    definition of $\pi$, we have 
        $\pi(s[i]) = s[i] = s'[i]$.
    Assume $s[i] = x \in\vars(\query)$. Since $s[i] = s[i_x]$, we obtain from 
    item~\ref{def:partIsom-i} of Definition~\ref{def:partIsom} that 
        $s'[i] = s'[i_x]$.
    Thus, $\pi(s[i]) = \pi(x) = s'[i_x] = s'[i]$, which completes the proof of 
    Lemma~\ref{lem:SimToIsom}. 
\end{proof}

\begin{Remark}\label{rem:DescriptiveIsHomDescriptive}
   Given some sufficiently large $\TS$. Hence, by Theorem~\ref{thm:queryContVsHom} 
    and Corollary~\ref{cor:Iso-Hom-Mod} we know that a query $\query$ is 
    descriptive for a sample $\Sample$ w.r.t $(\supp,(\ell,\ws,\WC))$ if, 
    and only if, $\query$ is an $(\ell,\ws,\WC)$-query with 
        $\Supp(\query,\Sample)\geq \supp$, 
    and there is no other $(\ell,\ws,\WC)$-query $\query'$ with 
        $\Supp(\query,\Sample)\geq \supp$
    and $\query\Hom\query'$ and $\query\not\cong\query'$.
\end{Remark}

\section{About Discovering}\label{app:discover}

We briefly discuss why $\TSalt$ is passed through incrementally in 
Line~\ref{algo:forDeltaiLoopLine} on an intuitive level by picking up the 
example in section~\ref{sec:discovery-short}. 
\setcounter{restoreAppTheorem}{\value{theorem}}
\setcounter{theorem}{\value{Counter_Exa:Run}}
\begin{Example}[extended]
    Let $\TS=\{\ta,\tb,\tc\}$ and $x_1, x_2, x_3\in\VS$. Consider the sample    
    $\Sample=\{\ta\tb\tb,\ta\tc\tc\}$, $\supp=1.0$, $\ell=\ws=3$, 
    $\wc = ((0,0),(0,0))$ and $k=2$.
    \par 
    On input $(\Sample,\supp,(\ell,\ws,\wc,2))$ the algorithm first 
    generates $\query = (x_1 x_2 x_3,\ws,\wc)$. Since $\query$
    satisfies the support threshold the algorithm proceeds by computing 
        $\TSalt =   \{ \{\ta\}\} \dot\cup 
                    \{\{\ta,\tb\}, \{\ta,\tc\}, \{\tb,\tc\} \}$.
    Assume the algorithm selects $x:=x_3$ during the first iteration of the 
    main loop. It turns out that $\TSalt_1 = \{ \{\ta\}\}$ does not contain a 
    typeset for an admissable replacement of $x_3$. Hence, the algorithm 
    considers $\TSalt_2$ during the second transition of the for-loop in 
    Line~\ref{algo:forDeltaiLoopLine}. The only admissable replacement is 
        $\replace{s}{x_3}{\{\tb,\tc\}}$, 
    and $s$ is replaced by $x_1 x_2 \{\tb,\tc\}$ ($\AV$ remains empty).
    \par 
    Let us assume that during the second transition through the main loop the
    algorithm selects $x:=x_1$ and $y:=\{\ta\}\in\TSalt_1$.
    The replacement of $x_1$ by $\{\ta\}$ is admissible 
    (as it has support 1 on $\Sample$). Therefore, $s$ is replaced by 
        $\{\ta\} x_2 \{\tb,\tc\}$
    and $\AV$ remains unchanged again.
    \par
    In its last iteration (during the second transition through the for-loop),
    $\replace{s}{x_2}{\{\tb,\tc\}}$ is the only admissible replacement 
    operation. The run terminates after this iteration and outputs the query 
    $\query=(s,\ws,\wc)$ with $s=\{\ta\} \{\tb,\tc\} \{\tb,\tc\}$.
    \par \vspace*{0.5em}
    Now consider the case where Line~\ref{algo:forDeltaiLoopLine} is
    omitted and $y$ is chosen from $\TSalt\cup\AV$. Then
    replacing $x_1$ by $\{\ta,\tb\}$ is an admissable replacement, but the 
    resulting query $\query'=(s',\ws,\wc)$ with 
        $s'= \{\ta,\tb\} \{\tb,\tc\} \{\tb,\tc\}$ 
    is not be descriptive (due to 

    $\query\contResp{\TS}\query'$).
\end{Example}
\setcounter{theorem}{\value{restoreAppTheorem}}

\setcounter{theorem}{\value{Counter_Thm:Thm:discovery}}
\begin{theorem}\textbf{(restated)}
    Given some sufficiently large $\TS$.
    Let $\Sample$ be a sample, let $\supp$ be a support threshold with 
    $0<\supp\leq 1$, let $(\ell,\ws,\WC,k)$ be query parameters with $k=1$ 
    if $\WC=\wC$.
	\begin{enumerate}[(a)]
		\item\label{item:ComputeDescrQuery:error}
            If there does not exist any $(\ell,\ws,\WC,k)$-swgg-- or dswg--query 
            that is descriptive for $\Sample$ w.r.t.\ $(\supp,(\ell,\ws,\WC,k))$
            then there is only one run of
            Algorithm~\ref{algo:ComputeDescQueryFromq} upon the defined input,
            and it stops in line~\ref{algo:ErrorMsgLineBeginning} with output 
            $\Error$.
		\item\label{item:ComputeDescrQuery:EveryOutputIsDescr}
		    Otherwise, every run of Algorithm~\ref{algo:ComputeDescQueryFromq} 
            upon input $(\Sample,\supp,(\ell,\ws,\WC,k))$ terminates and 
            outputs an \swggquery{} or \dswgquery{} $\query$ (depending on $k$), 
            with $|\dc|\leq k$ for all $\dc\in\typesets(\query)$, that is 
            descriptive for $\Sample$ w.r.t.\ $(\supp,(\ell,\ws,\WC,k))$.
	\end{enumerate}
    \vspace*{-1em}
\end{theorem}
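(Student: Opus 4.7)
The plan is to prove part (a) via the universal property of the most general query $\mgq$, and part (b) by separating termination, support maintenance, and descriptivity.

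For part (a), I first observe that every $(\ell,\ws,\WC,k)$-query $\query'=(s',\ws,\WC)$ admits the homomorphism $x_i \mapsto s'[i]$ from $\mgq$ to $\query'$---the extra condition of Definition~\ref{def:homomorphism} is vacuous because each variable in $\mgs$ occurs exactly once---so by Theorem~\ref{thm:queryContVsHom} we have $\query' \contResp{\TS} \mgq$, and hence $\Supp(\query',\Sample) \leq \Supp(\mgq,\Sample)$. It follows that when the guard at line~\ref{algo:ErrorMsgLineBeginning} fires no $(\ell,\ws,\WC,k)$-query meets the support threshold and therefore no descriptive query can exist; conversely, when $\Supp(\mgq,\Sample) \geq \supp$, the non-empty set of $\Supp$-admissible queries is finite up to isomorphism by Corollary~\ref{cor:Iso-Hom-Mod} and thus contains a $\contResp{\TS}$-minimal representative, which is descriptive. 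This yields the equivalence required for part (a).

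For part (b), termination is immediate because the outer loop iterates exactly $\ell$ times and each inner loop ranges over a finite set, while the invariant $\Supp(\query_j,\Sample) \geq \supp$ is preserved along the execution thanks to the explicit test at line~\ref{algo:requestsToOracleLine}. Descriptivity is the heart of the argument, which I would establish by contradiction. Assume the output $\query=(s,\ws,\WC)$ admits a competitor $\query'=(s',\ws,\WC)$ with $\Supp(\query',\Sample) \geq \supp$ and $\query' \propContResp{\TS} \query$. Theorem~\ref{thm:queryContVsHom} together with Corollary~\ref{cor:Iso-Hom-Mod} supplies a homomorphism $h$ with $h(s)=s'$ while $\query \not\cong \query'$, and Lemma~\ref{lem:SimToIsom} localises the mismatch to some position $i \in [\ell]$. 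I would split on the structural form of this discrepancy into three cases: (A) $s[i]$ is a variable $x$ that $h$ sends either to a typeset or to a variable already in $\AV$; (B) two distinct variables of $s$ are identified by $h$; and (C) $s[i]$ is a typeset $\dc$ with $h(\dc)=s'[i] \subsetneq \dc$ (possible only for \dswgqueries).

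For each case I would identify the iteration $j$ at which the algorithm processed the variable responsible for position $i$ and prove the key admissibility lemma: the replacement dictated by $h$ is admissible at iteration $j$. The proof lifts any witness $(\mu',e')$ for $\trace \models \query'$ to a witness for the intermediate query obtained by applying the $h$-dictated replacement to $s_j$, transporting the substitution via $h$ on positions processed before $j$ (where $s_j$ already carries data consistent with $s'$) and using $\mu'$ on the remaining positions. With admissibility in hand, the greedy inner loop---which scans $\TSalt_1,\ldots,\TSalt_k$ in increasing order of typeset size with $\AV$ prepended to $\TSalt_1$---would have selected at iteration $j$ some replacement of size at most $|s'[i]|$, contradicting $s[i]=x$ in cases (A)/(B) and $|s[i]|>|s'[i]|$ in case (C). The main obstacle is proving this admissibility lemma uniformly across the three cases and, in particular, handling variables of $s_j$ that survive into later iterations: here the restriction in Definition~\ref{def:homomorphism}---that variables occurring multiply in $s$ must map to singletons whenever $h$ sends them to subsets of $\TS$---combined with an induction over the processing order is what ensures the $\query'$-witness lifts coherently and the lifted substitution remains well defined.
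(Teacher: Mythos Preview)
Your approach is correct and close in spirit to the paper's, but the organization differs. The paper proves, by induction on the iteration counter $r$, the invariant $\query_r \sim_{I_r} \query'$ (partial isomorphism on the already-processed positions), and then reads off $\query_{\hat r}\cong\query'$ from $I_{\hat r}=[\ell]$ via Lemma~\ref{lem:SimToIsom}. You instead pick a single position witnessing $\query\not\sim_{[\ell]}\query'$ and argue directly at the corresponding iteration. Both rely on the same key fact: since $\query_{j-1}\Hom\query_{\hat r}\Hom\query'$ (Claim~\ref{claim:algo-invariants}(\ref{item:algo-invariants:Hom}) composed with the assumed $h$), the replacement dictated by $\query'$ at position $p_j$ is always admissible. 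Your route avoids carrying the full invariant, which is a genuine simplification; the paper's route is more uniform and makes the case analysis mechanical.

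Two points in your write-up need sharpening. First, the phrase ``$h$ sends $x$ to a variable already in $\AV$'' in Case~(A) is ill-typed: $h$ maps into symbols of $\query'$, and $\AV$ lives on the $\query$-side. What you need in Case~(A) is simply that $s'[i]$ is a typeset. Second, Case~(B) as stated has a gap. If the common image $h(x)=h(y)$ is a non-singleton typeset and $x$ occurs only once in $s_{j_y-1}$, then after the VarRepl $y\mapsto x$ the variable $x$ occurs twice, and the lifted map violates the singleton clause of Definition~\ref{def:homomorphism}---so the intermediate query need not hom to $\query'$. The fix (which the paper makes explicit in its Claim~\ref{claim:algo-noChangeOp}) is to restrict Case~(B) to the situation where $h(x)=h(y)$ is a variable; when it is a typeset you fall back to Case~(A) at either of the two positions. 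Once you make that restriction, the admissibility lemma goes through cleanly via the composed homomorphism $g\colon\query_{j-1}\Hom\query'$, with no separate induction needed.
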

\setcounter{theorem}{\value{restoreAppTheorem}}

\begin{proof}
    First, consider the case that there does not exist any swgg-- or dswg--query
    with parameters $(\ell,\ws,\WC,k)$ that is descriptive for $\Sample$ w.r.t.\ 
    $(\supp,(\ell,\ws,\WC,k))$.
    Note that this implies $\Supp(\mgq,\Sample)<\supp$, whereby $\mgq$ is the 
    most general query for $(\ell,\ws,\WC,k)$, 
    Recall that the query string of $\mgq$ is of form 
        $\mgs = x_1 \dots x_\ell$, 
    and $\mgq$ is most general in the sense that 
        $\query' \contResp{\TS} \mgq$
    for each $(\ell,\ws,\WC,k)$-query $\query'$. Hence, for every 
    $(\ell,\ws,\WC,k)$-query $\query'$ with $\mods{\query}\subseteq\mods{\mgq}$
    it holds that $\Supp(\query,\Sample)\leq \Supp(\mgq,\Sample)<\supp$.
	Therefore, the algorithm stops in Line~\ref{algo:ErrorMsgLineBeginning} and 
    outputs an error message $\Error$.
	This proves statement \eqref{item:ComputeDescrQuery:error}.
    
    The second statement \eqref{item:ComputeDescrQuery:EveryOutputIsDescr} of
    Theorem~\ref{thm:ComputeDescrQuery} is an immediate consequence of the
    Propositions~\ref{prop:ComputeDescQueryFromq-swgg} and
    \ref{prop:ComputeDescQueryFromq-dswg} for \swggqueries{} and \dswgqueries,
    respectively. 
\end{proof}

\begin{Proposition}\label{prop:ComputeDescQueryFromq-swgg}
    Given some sufficiently large $\TS$.
    Let $\Sample$ be a sample, let $\supp$ be a support threshold with 
    $0<\supp\leq 1$, let $(\ell,\ws,\wC)$ be query parameters and let 
    $\mgq$ be the most general query for $(\ell,\ws,\wC)$. 
    In case that $\Supp(\mgq,\Sample)\geq\supp$, every run of 
    Algorithm~\ref{algo:ComputeDescQueryFromq} upon input 
        $(\Sample,\supp,(\ell,\ws,\wC))$ 
    terminates and outputs an \swggquery{} $\query$, that is descriptive for 
    $\Sample$ w.r.t.\ $(\supp,(\ell,\ws,\wC))$.
\end{Proposition}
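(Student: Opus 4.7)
The plan is to verify termination, well-formedness of the output, and then the two substantive properties (support threshold and descriptivity) separately. Termination follows since the main while-loop in line~\ref{algo:mainLoopLine} removes one variable from $\NR$ per iteration, $|\NR|=\ell$ at the start, the for-loop runs only once (as $k=1$ for \swggqueries), and the inner while-loop iterates over the finite set $\TSalt_1\cup\AV$. Because neither $\ell$, $\ws$, nor $\wC$ is ever changed and each update alters a single position, the returned query, call it $\query_{out}$, is a syntactically valid $(\ell,\ws,\wC)$-\swggquery{}. The assumption $\Supp(\mgq,\Sample)\geq\supp$ defeats the early exit at line~\ref{algo:ErrorMsgLineBeginning}, and a routine induction on iterations, using the guard at line~\ref{algo:requestsToOracleLine}, shows that the intermediate query $\query^{(j)}$ after the $j$-th iteration always satisfies $\Supp(\query^{(j)},\Sample)\geq\supp$; in particular this holds for $\query_{out}$.

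For descriptivity, my plan is to combine Theorem~\ref{thm:queryContVsHom}, Corollary~\ref{cor:Iso-Hom-Mod}, and Lemma~\ref{lem:SimToIsom}. It suffices to show that every $(\ell,\ws,\wC)$-\swggquery{} $\query'$ with $\Supp(\query',\Sample)\geq\supp$ and $\query' \contResp{\Gamma} \query_{out}$ satisfies $\query' \sim_{[\ell]} \query_{out}$; by Lemma~\ref{lem:SimToIsom} this gives $\query'\cong\query_{out}$, and by Corollary~\ref{cor:Iso-Hom-Mod} therefore $\Mods{\Gamma}{\query'}=\Mods{\Gamma}{\query_{out}}$, ruling out $\query'\propContResp{\Gamma}\query_{out}$. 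I would establish $\query' \sim_{[\ell]} \query_{out}$ via the loop invariant: after the $j$-th iteration of the main loop, for every $(\ell,\ws,\wC)$-\swggquery{} $\query''$ with $\query''\contResp{\Gamma}\query^{(j)}$ and $\Supp(\query'',\Sample)\geq\supp$, we have $\query''\sim_{I_j}\query^{(j)}$, where $I_j$ is the set of positions of variables already removed from $\NR$ in the first $j$ iterations. The base case $j=0$ is vacuous since $I_0=\varnothing$.

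In the inductive step at iteration $j+1$, with selected variable $x$ at position $p$, I would split on the two outcomes. In the ReplaceOp case $x$ is replaced by some $y\in\TSalt_1\cup\AV$: the homomorphism $\query^{(j+1)}\Hom\query''$ supplied by Theorem~\ref{thm:queryContVsHom} together with the witnessing replacement forces $s''[p]$ to satisfy Definition~\ref{def:partIsom} relative to $s^{(j+1)}[p]$ -- either both equal the type $y$, or both lie in the same position-equivalence class as the occurrence of $y\in\AV$ already fixed at some $p'\in I_j$. In the NoChangeOp case every alternative $y\in\TSalt_1\cup\AV$ failed the support test at line~\ref{algo:requestsToOracleLine}, so applying Theorem~\ref{thm:queryContVsHom} to each failed candidate rules out the corresponding specialisations in $\query''$ at position $p$, forcing $s''[p]$ to carry a fresh variable distinct from every symbol occurring at positions in $I_j$.

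The main obstacle is the careful bookkeeping around the set $\AV$ of available variables: a variable made available at iteration $j$ can later be picked up as a replacement, fusing two position-classes that the invariant previously treated as independent. I would manage this by phrasing the invariant in terms of the equivalence classes on positions induced by the current query string (two positions are related iff they carry the same symbol), arguing that each ReplaceOp either refines a class by fixing it to a type or merges two classes, while each NoChangeOp commits a singleton class -- with Theorem~\ref{thm:queryContVsHom} guaranteeing at each step that any candidate $\query''$ respects the resulting structure. Applied $\ell$ times, the invariant at $j=\ell$ with $I_\ell=[\ell]$ and $\query''=\query'$ yields $\query'\sim_{[\ell]}\query_{out}$, completing the descriptivity argument.
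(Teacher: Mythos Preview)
Your proposal is correct, but the route differs from the paper's. The paper's proof of this proposition is a brief reduction: it observes that for $k=1$ the algorithm coincides, line by line, with the algorithm from \cite{KSSSW22} for \swgqueries{} except that the black-box matching subroutine now handles generalised rather than local gap-size constraints. Since descriptiveness of the output depends only on the structure of the refinement loop and on support checks (which are oblivious to how matching is implemented), the correctness result from \cite{KSSSW22} carries over verbatim.

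Your approach, by contrast, is a self-contained direct proof via the partial-isomorphism invariant $\query''\sim_{I_j}\query^{(j)}$. This is precisely the machinery the paper deploys for Proposition~\ref{prop:ComputeDescQueryFromq-dswg} (the \dswgquery{} case), specialised to $k=1$: your case split ReplaceOp/NoChangeOp corresponds to the paper's \TypeRepl/\VarRepl/\NoChange{} trichotomy there, and your treatment of the $\AV$ bookkeeping matches Claim~\ref{claim:algo-noChangeOp}. The trade-off is clear: the paper's argument is short but leans on an external reference, while yours is longer but stands alone and would in fact subsume the \swgquery{} result it cites. Either is sound; the paper simply chose not to repeat the invariant argument twice.
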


\begin{proof}[Proof (Sketch).]
    For the special case where $\mgq$ is an \swgquery, \cite{KSSSW22} provided 
    the following result: 
    The algorithm obtained from Algorithm~\ref{algo:ComputeDescQueryFromq} by 
    starting with an arbitrary input query $\query$ instead of $(\ell,\ws,\WC)$,
    outputs either a query $\query'$ that is descriptive for $\Sample$ w.r.t.\ 
    $(\supp,(\ell,\ws,\wc))$ and satisfies 
        $\query' \contResp{\TS} \query$
    or, in case that no such $\query'$ exists, the message $\Error$.
    \par 
    Note that despite the slight difference regarding the input parameters the 
    algorithms only differ in the matching routine due to the generalised
    gap-size constraints:
    because of $\Supp(\mgq,\Sample)\geq\supp$, for the most general query 
    defined in line~\ref{algo:MgqDefLine}, every run of 
    $\QComputeDescrQueryFromq$ will reach 
    Line~\ref{algo:mainLoopLine} and proceed from there on.
    Since the query string of an \swggqueries{} is solely defined over 
    $\Vars\cup\TS$, i.e. $k=1$, $\TSalt$ is set to 
    $$\TSalt_1 = \{\type\in\TS:
         \frac{|\{\trace\in\Sample\ :\ \type\in\types(\trace)\}|}{|\Sample|}
         \geq\supp\}
    $$
    in line~\ref{algo:TSaltDefLine} and is extended by the set of currently 
    available variables $\AV$ in line~\ref{algo:selectNextPositionLine} for 
    each iteration of the main loop. 
    Note that the for-loop is only passed once during each transition through 
    the main loop.
    \par 
    Hence, a run of Algorithm~\ref{algo:ComputeDescQueryFromq} for $k=1$ equals 
    a run of the algorithm presented in \cite{KSSSW22} which gets as input 
    the Sample $\Sample$, the support threshold $\supp$ and the query $\mgq$,     
    except for a different black box matching routine.
    Thus, the results regarding the descriptiveness of the output query carry 
    over from \cite{KSSSW22} to the case that 
    Algorithm~\ref{algo:ComputeDescQueryFromq} computes an \swggquery. 
\end{proof}

\begin{Proposition}\label{prop:ComputeDescQueryFromq-dswg}

    Let $|\Gamma|\geq 2$. Let $\Sample$ be a sample, let $\supp$ be a support threshold with 
    $0<\supp\leq 1$, let $(\ell,\ws,\wc,k)$ be query parameters and let 
    $\mgq$ be the most general query for $(\ell,\ws,\wc,k)$. 
    In case that $\Supp(\mgq,\Sample)\geq\supp$, every run of 
    Algorithm~\ref{algo:ComputeDescQueryFromq} upon input 
        $(\Sample,\supp,(\ell,\ws,\wc,k))$ 
    terminates and outputs an \dswgquery{} $\query$, with 
        $|\dc|\leq k$ 
    for all $\dc\in\typesets(\query)$, that is descriptive for 
    $\Sample$ w.r.t.\ $(\supp,(\ell,\ws,\wc, k))$.
\end{Proposition}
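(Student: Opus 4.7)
My plan is to verify termination, validity of the output query, and descriptiveness. Termination is straightforward: the main loop iterates at most $\ell$ times since $\NR$ strictly decreases in line~\ref{algo:selectNextPositionLine}, the for-loop runs at most $k$ times, and each inner while-loop processes every element of the finite $\TSalt_i$ at most once. Every replacement performed in line~\ref{algo:ReplaceOpLine} substitutes a variable by an element of some $\TSalt_i$ with $i\leq k$ or, when $i=1$, by an available variable from $\AV$, so the output is a $(\ell,\ws,\wc,k)$-\dswgquery{} with $|\dc|\leq k$ for all $\dc\in\typesets(\query)$. The support bound $\Supp(\query,\Sample)\geq\supp$ is preserved inductively: $\mgq$ passes the initial check in line~\ref{algo:ErrorMsgLineBeginning} and every further replacement is guarded by the support test on line~\ref{algo:requestsToOracleLine}.

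For descriptiveness, I would argue by contradiction. Assume the output $\query$ is not descriptive; then by Remark~\ref{rem:DescriptiveIsHomDescriptive} there exists another $(\ell,\ws,\wc,k)$-query $\query''$ with $\Supp(\query'',\Sample)\geq\supp$, $\query\Hom\query''$, and $\query\not\cong\query''$. By Theorem~\ref{thm:queryContVsHom} this is equivalent to $\query''\contResp{\TS}\query$, and by Corollary~\ref{cor:Iso-Hom-Mod} it additionally rules out $\query''\Hom\query$. Without loss of generality (by a $\Hom$-minimality argument on the finite set of candidates) assume $\query''$ is chosen so that no proper refinement of $\query''$ has sufficient support. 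The plan is then to maintain the loop invariant $\query^{(i)}\sim_{P_i}\query''$, where $\query^{(i)}$ is the current query after $i$ iterations of the main loop, $P_i\subseteq[\ell]$ is the set of already-processed positions, and $\sim$ denotes partial isomorphism as in Definition~\ref{def:partIsom}. The base case $P_0=\emptyset$ holds trivially; the inductive step analyses the currently selected variable $x$ at position $p$, case-splitting on $s''[p]$: a typeset of size $j$, an earlier-processed variable (already in $\AV$, hence in $\TSalt_1$ at the start of the iteration), or a fresh variable (no admissible replacement, so $x$ is added to $\AV$ in line~\ref{algo:noChangeOpLine}). In each case I will show that the algorithm's admissible-replacement search in lines~\ref{algo:forDeltaiLoopLine}--\ref{algo:ReplaceOpLine} either finds a replacement consistent with $s''[p]$ or leaves $x$ in place exactly as $\query''$ does. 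After the final iteration, $P_\ell=[\ell]$ and Lemma~\ref{lem:SimToIsom} yields $\query\cong\query''$, contradicting $\query\not\cong\query''$.

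I expect the main obstacle in the inductive step to be the subcase $s''[p]=\dc$ with $|\dc|=j\geq 2$. The algorithm iterates through $\TSalt_i$ for $i=1,\dots,j$, so before reaching $\TSalt_j$ it may find an admissible size-$<j$ typeset at position $p$. Partial isomorphism requires the chosen typeset to be \emph{exactly} $\dc$, not a subset or a different size-$j$ typeset. The resolution uses the minimality assumption on $\query''$: if a smaller admissible typeset existed at $p$ while the rest of $\query''$ were unchanged, one would obtain a proper refinement of $\query''$ with sufficient support, contradicting minimality. This forces the sweep to reach exactly $\TSalt_j$, where the algorithm's choice is $\dc$ itself (or, if there are several admissible size-$j$ typesets, one re-selects $\query''$ accordingly while preserving the invariant). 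The interaction between $\TSalt_1$ and $\AV$ requires analogous care; it is precisely the incremental traversal of $\TSalt$, illustrated by the extended Example~\ref{examp:run} in Appendix~\ref{app:discover}, that makes the invariant maintainable. The argument closely parallels the proof for \swgqueries{} in~\cite{KSSSW22} invoked for Proposition~\ref{prop:ComputeDescQueryFromq-swgg}, with the new complication being typesets of size greater than one.
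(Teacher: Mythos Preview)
Your overall plan mirrors the paper's proof closely: termination and validity are as you say, and descriptiveness is argued by contradiction via the loop invariant $\query^{(i)}\sim_{P_i}\query''$ together with Lemma~\ref{lem:SimToIsom}. The paper organises the inductive step by case-splitting on the algorithm's action (\TypeRepl, \VarRepl, \NoChange) rather than on the shape of $s''[p]$, but that is largely a matter of presentation.

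There is, however, a genuine gap in your handling of the typeset case. You invoke minimality of $\query''$ to exclude a smaller admissible typeset at position $p$, but the admissibility check in line~\ref{algo:requestsToOracleLine} is relative to the \emph{current intermediate} query $\query^{(i-1)}$, not to $\query''$. Since $\query^{(i-1)}$ is in general strictly more general than $\query''$ on the unprocessed positions, a typeset $\dc'$ with $|\dc'|<j$ may well pass the algorithm's test even if $\replace{\query''}{p}{\dc'}$ lacks sufficient support; and when $\dc'\not\subseteq\dc$ the latter is not even a refinement of $\query''$, so your minimality hypothesis says nothing. The ``re-select $\query''$'' fallback is also problematic: after changing the target you would have to re-establish $\Supp(\query'',\Sample)\geq\supp$, $\query\Hom\query''$, $\query\not\cong\query''$, and the invariant on all earlier positions. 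What is actually needed---and what the paper uses as its statement $(\dagger)$---is the homomorphism chain $\query^{(i)}\Hom\query''$ for all $i$, obtained from $\query^{(i-1)}\Hom\query^{(i)}$ (every algorithm step is itself a substitution) together with $\query=\query^{(\ell)}\Hom\query''$. With this in hand the typeset case goes through cleanly, without minimality or re-selection: if the algorithm selects some typeset $\dc_0$ in iteration $i$, then $\query^{(i)}\Hom\query''$ forces $s''[p]\subseteq\dc_0$, while $\query^{(i-1)}\Hom\query''$ shows that $s''[p]$ itself is an admissible replacement for $x$ (the replaced query still has a homomorphism to $\query''$, hence support $\geq\supp$), so the incremental sweep over $\TSalt_1,\ldots,\TSalt_k$ cannot overshoot $|s''[p]|$; together these force $\dc_0=s''[p]$ exactly. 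The variable and no-change cases in the paper likewise hinge on $(\dagger)$ rather than on any extremality assumption on $\query''$.
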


\begin{proof}
    Throughout the proof, we heavily make use of
	Remark~\ref{rem:DescriptiveIsHomDescriptive}, that holds for \dswgqueries{}
    already if $|\Gamma|\geq 2$.

    Assume $\Supp(\mgq,\Sample)\geq\supp$, for the most general disjunctive 
    query defined in line~\ref{algo:MgqDefLine}.
	In this case, every run of $\QComputeDescrQueryFromq$
	will reach Line~\ref{algo:mainLoopLine} and proceed from there on.
	Let $\TSalt$ be the set of typesets defined in Line~\ref{algo:TSaltDefLine}.
	Note that every query $\query'$ that is descriptive for $\Sample$ w.r.t.\ 
    $(\supp,(\ell,\ws,\wc,k))$, satisfies 
        $\typesets(\query')\subseteq\TSalt$, 
    because otherwise, $\Supp(\query',\Sample)$ would be $<\supp$.
    Especially, for each query $\query$ that is computed by 
    Algorithm~\ref{algo:ComputeDescQueryFromq} it holds that 
        $\typesets(\query)\subseteq\TSalt$,
    and 
        $|\dc|\leq k$ 
    for all $\dc\in\TSalt$ (due to line~\ref{algo:TSaltDefLine}).
    Hence, all typesets occuring in $\query$ have size less or equal to $k$.
    \par
    Let us first argue that every iteration $r$ of the outer \textbf{while}-loop
    starting in Line~\ref{algo:mainLoopLine}, will eventually end.
	To see this, first, note that the set of available types and variables 
    $\TSalt\cup\AV$ will always be finite, since it is bounded by the number of 
    types and typesets occurring in the given sample and the number of variables
    in $\mgs$, which equals $\ell$.
    During the $i$-th (of a bounded number of $k$) iterations through the 
    for-loop, the inner while loop starting in Line~\ref{algo:innerLoopLine} 
    ends after at most $|\TSalt_i|$ iterations and during each iteration the 
    current variable is either replaced by a typeset or variable or 
	remains in the query string if no replacement operation is possible.
    \par 
    Let us now fix a particular run of $\QComputeDescrQueryFromq$.
    Let $\query_0=\mgq$, $s_0=\mgs$, $\NR_0\deff \vars(\query_0)$ and 
    $\AV_0\deff\emptyset$. 
	And for every $r\in\set{1,2,\ldots}$ let $s_r$, $\NR_r$, $\AV_r$ be the
	query string $s$ and the sets $\NR$ and $\AV$ at the end of the $r$-th 
    iteration through the \textbf{while}-loop, and let $q_r$ be the query 
    $(s_r,\ws,\WC)$.
	Furthermore, for each $r\geq 1$ let $x_r$ be the particular element in
	$\NR$ that is chosen at the beginning of the $r$-th iteration through
	the outer \textbf{while}-loop.
    \par 
    By induction on $r$ and by construction of the algorithm, it is
	straightforward to prove the following claim.

	\begin{Claim}\label{claim:algo-invariants}
		For every $r\geq 1$ we have
		\begin{enumerate}
			\item\label{item:algo-invariants:NRandAV} 
			$\NR_{r}=\vars(\query_0)\setminus\set{x_1,\ldots,x_r}$ and
			$\NR_r\cap\AV_r=\emptyset$ and
			$\NR_r\cup\AV_r=\vars(s_r)$.
			\item\label{item:algo-invariants:Supp} 
			$\typesets(\query_r)\subseteq\TSalt$ and $\Supp(\query_r,\Sample)\geq\supp$.
			\item\label{item:algo-invariants:Hom}
			$\query_{r-1}\Hom\query_r$. 
			\item\label{item:algo-invariants:NR-nonrepeated} 
			For every $x\in\NR_r$ we have $\Pos{s_r}{x}=\Pos{s_0}{x}$.
			\item\label{item:algo-invariants:PositionsNotInNRremainUnchanged}
			$s_r[j]=s_{r-1}[j]$ for all $j\in[\ell]\setminus\Pos{s_{0}}{x_r}$.
		\end{enumerate}
	\end{Claim}

    From this claim we obtain that after $\hat{r}:=|\vars(\query_0)|=\ell$
	iterations through the algorithm's outer \textbf{while}-loop, the 
    algorithm's run terminates with
	    $\NR_{\hat{r}}=\emptyset$ 
    and outputs an $(\ell,\ws,\wc,k)$-query $\query_{\hat{r}}$ with 
	    $\Supp(\query_{\hat{r}},\Sample)\geq \supp$ 
    and
	    $\query_0\Hom\query_{\hat{r}}$ 
    (i.e., by Theorem~\ref{thm:queryContVsHom},
        $\query_{\hat{r}}\contResp{\TS}\query_0$).
    \par 
    We need to show that this query $\query_{\hat{r}}$ is descriptive for 
	$\Sample$ w.r.t.\ $(\supp,(\ell,\ws,\wc,k))$. 
	\emph{For contradiction}, assume that it is not.
	Then, according to Remark~\ref{rem:DescriptiveIsHomDescriptive},
	there exists an $(\ell,\ws,\wc,k)$-query $\query'$ with
	    $\Supp(\query',\Sample)\geq\supp$
    and 
	    $\query_{\hat{r}}\Hom\query'$ 
    and
	    $\query'\not\cong\query_{\hat{r}}$.
    \par 
    From Claim~\ref{claim:algo-invariants}\eqref{item:algo-invariants:Hom} we
	know that 
        $\query_r\Hom\query_{\hat{r}}$ 
    for all $r\leq\hat{r}$, and hence
	    $\query_{\hat{r}}\Hom\query'$ 
    yields that
	\begin{equation}\tag{$\dagger$}\label{eq:HomToqStrich}
		\query_r\Hom\query' \quad \text{for all $r\in\set{0,1,\ldots,\hat{r}}$.}
	\end{equation}
	Let $s'$ be the query string of $\query'$.
	In order to deduce the desired contradiction, the notion of partially
	isomorphic queries 
	will be crucial.
	For each $r\in\set{0,1,\ldots,\hat{r}}$ let 
	\[
		I_r \ \deff\ \setc{i\in[\ell]}{s_0[i]\in\TS}\ \cup\
		\bigcup_{\nu=1}^r\Pos{s_0}{x_\nu}\,.
	\]
    Note that $I_0=\varnothing$ and $|I_r|=r$ for all $r\geq 1$, since 
    $\types(\query_0)=\varnothing$ and each variable occurs only once in $s_0$.
	The next claim provides the most crucial technical contribution of our
	proof.
    \begin{Claim}\label{claim:SimInduction}
		For every $r\in\set{0,1,\ldots,\hat{r}}$ we have
		$\query_r\sim_{I_r}\query'$.
	\end{Claim}
	Before turning to the proof of Claim~\ref{claim:SimInduction}
	let us first argue how the claim serves for completing the proof of 
	Theorem~\ref{thm:ComputeDescrQuery}.
	For $r=\hat{r}$ we know that $\NR_{\hat{r}}=\emptyset$.
	Hence, $\set{x_1,\ldots,x_{\hat{r}}}=\vars(\query_0)$ and 
    $I_{\hat{r}}=[\ell]$. 
    From Claim~\ref{claim:SimInduction}	we obtain 
        $\query_{\hat{r}}\sim_{[\ell]}\query'$. 
    But, by Lemma~\ref{lem:SimToIsom} this implies that
        $\query_{\hat{r}}\cong\query'$, 
    contradicting our assumption that
	    $\query_{\hat{r}}\not\cong\query'$. 
	Thus, all that remains to complete the proof of 
    Theorem~\ref{thm:ComputeDescrQuery} is to prove 
    Claim~\ref{claim:SimInduction}.
    \begin{proof}[Proof of Claim~\ref{claim:SimInduction}]\ \\
        We proceed by induction on $r$. 
        For the induction base with $r=0$ recall that $\query_0=\mgq$. Thus,
        $I_0 = \varnothing$, which immediately implies that 
            $\query_0 \sim_{I_0}\query'$.
        \par 
        For the induction step consider an arbitrary $r\geq 1$. At the beginning
        of the $r$-th iteration of the main loop the situation is as follows:
        $\NR = \NR_{r-1} \neq\varnothing$ and $s=s_{r-1}$ and, by 
        Claim~\ref{claim:algo-invariants}\eqref{item:algo-invariants:NRandAV},
            $\NR_{r-1}= \vars(\query_0)\setminus
                        \{ x_\nu\ :\ 1\leq \nu\leq r{-}1 \}$.
        Recall that by $x_r$ we denote the particular element of $\NR_{r-1}$ 
        chosen at the beginning of the $r$-th iterations through the main loop.
        \par 
        The induction hypothesis states that 
            $\query_{r-1} \sim_{I_{r-1}} \query'$
        holds. We have to show that 
            $\query_{r} \sim_{I_{r}} \query'$
        holds as well, whereby 
            $I_r = I_{r-1} \cup \Pos{s_0}{x_r}$.
        From \eqref{eq:HomToqStrich} we know that $\query_r \Hom \query'$. Thus,
        the following is true:
        \begin{enumerate}
			\item\label{item:sim-known-i}
			For all $i,j\in[\ell]$: \ If $s_r[j]=s_r[i]$ then $s'[j]=s'[i]$.\\
			\item\label{item:sim-known-ii}
			For all $i\in[\ell]$: \ 
            If $\InPotTS{s_r[i]}$
            then $\varnothing \subset s'[i] \subseteq s_r[i]$.
		\end{enumerate}
        Furthermore, by Claim~\ref{claim:algo-invariants}\eqref{item:algo-invariants:PositionsNotInNRremainUnchanged},
        $s_r$ coincides with $s_{r-1}$ on all positions $i\in[\ell]$ with 
            $i\not\in\Pos{s_0}{x_r}$.
        \par 
        Recall that $|\Pos{s_0}{x_r}|$=1 due to $s_0=\mgs$. To ease notation
        we simply write $p_r$ to denote the unique position of $x_r$ in $s_0$.
        Since
            $p_r\not\in I_{r-1}$,
        the induction hypothesis 
            $\query_{r-1} \sim_{I_{r-1}} \query'$ 
        hence implies that 
            $\query_{r} \sim_{I_{r-1}} \query'$.
        In order to prove that 
            $\query_{r} \sim_{I_{r}} \query'$, 
        it only remains to prove the following:
        \begin{enumerate}[i]
			\item\label{item:sim-todo-i} For all $j\in I_{r-1}$:
            \ If $s'[j]=s'[p_r]$ then $s_r[j]=s_r[p_r]$.\\
			\item\label{item:sim-todo-ii}
            If $\InPotTS{s'[p_r]}$ then
            $s_r[p_r]\subseteq s'[p_r]$.
		\end{enumerate}
        By the definition of the algorithm, the query $\query_r$ is obtained from 
        $\query_{r-1}$ by performing exactly one replacement operation using a 
        typeset (\TypeRepl) (line~\ref{algo:ReplaceOpLine} with 
        $y\in\TSalt_i\setminus\AV$), one replacement operation using a variable 
        (\VarRepl) (line~\ref{algo:ReplaceOpLine} with $y\in\AV$) or no replacement 
        operation (\NoChange) in line~\ref{algo:noChangeOpLine}. 
        Note that a \NoChange{} will only be performed if the following 
        conditions are true:
        \begin{equation}\label{eq:algo-sim-vars}\tag*{$(*)_r$}
            \begin{array}{l}
                \text{For every typeset $\dc\in\TSalt$ the query
                    $\query_\dc\deff\replace{\query_{r-1}}{x_r}{\dc}$ does not satisfy}\\
                \text{$\Supp(\query_\dc,\Sample) \ge \supp$.}
            \end{array}
        \end{equation}
        and
        \begin{equation}\label{eq:algo-sim-notype}\tag*{$(**)_r$}
            \begin{array}{l}
                \text{For every variable $y\in\AV_{r-1}$ the query
                    $\query_y\deff\replace{\query_{r-1}}{x_r}{y}$ does not satisfy}
                \\
                \text{$\Supp(\query_y,\Sample) \ge \supp$.}
            \end{array}
        \end{equation} 
        \begin{Claim}\label{claim:algo-noChangeOp}
            Let $p_r$ be the position of $x_r$ in $s_0$.
            \begin{itemize}
                \item
                \ref{eq:algo-sim-vars} implies that $s'[p_r]\in\VS$.
                \item
                \ref{eq:algo-sim-notype} implies that
                $\InPotTS{s'[p_r]}$ or
                $s'[p_r]\neq s'[j]$ for all $j\in I_{r-1}$.
            \end{itemize}
            \vspace*{-1em}
        \end{Claim}

        \begin{proof}
            Let us first focus on the claim's first statement. Let $p_r$ be the 
            position of $x_r$ in $s_0$ and let \ref{eq:algo-sim-vars} be 
            satisfied, i.e. there exists no $\dc\in\TSalt$ such that replacing 
            $x_r$ by $\dc$ yields a query that satisfies $\supp$.
            For contradiction, assume
                $s'[p_r] = \dcInPotTS$.
            By the choice of $\TSalt$ and since 
            $\Supp(\query',\Sample)\geq\supp$, we know that $\dc\in\TSalt$.
            For 
                $\query_\dc \deff \replace{\query_{r-1}}{x_r}{\dc}$
            we have 
                $\query_\dc \Hom \query'$
            because $\query_{r-1} \Hom \query'$ and $s'[p_r] = \dc$.
            Therefore, 
                $\Supp(\query_\dc,\Sample)\geq \Supp(\query',\Sample)\geq\supp$,
            contradicting \ref{eq:algo-sim-vars}. This completes the proof of 
            the first statement.      
            \par 
            Let us now turn to the second statement of 
            Claim~\ref{claim:algo-noChangeOp}. Let $p_r$ be the 
            position of $x_r$ in $s_0$ and let \ref{eq:algo-sim-notype} be 
            satisfied, i.e. there exists no available variable $y\in\AV_{r-1}$
            such that replacing $x_r$ by $y$ yields a query that satisfies 
            $\supp$.
            If $\InPotTS{s'[p_r]}$ 
            we are done.
            Consider the case where $s'[p_r]\in\Vars$ and assume for 
            contradiction that there exists a $k\in I_{r-1}$ such that  
                $s'[p_r] = s'[k]$.
            Let $y\deff s_{r-1}[k]$. From $\query_{r-1}\Hom\query'$ and 
            $s'[k]\in\Vars$ we obtain $y\in\Vars$. Since $k\in I_{r-1}$ we then
            obtain that there is a 
                $\nu\in\{1,\ldots, r-1\}$ 
            such that $\{k\}\in\Pos{s_0}{x_\nu}$. 
            We claim that $x_\nu\in\AV_{r-1}$.
            For contradiction, assume that $x_\nu\not\in\AV_{r-1}$. By 
            Claim~\ref{claim:algo-invariants}\eqref{item:algo-invariants:NRandAV}
            we have 
                $\vars(s_{r-1}) = \NR_{r-1} \cup \AV_{r-1}$,
            and hence $y\in\NR_{r-1}$.
            From Claim~\ref{claim:algo-invariants}\eqref{item:algo-invariants:NR-nonrepeated}
            we obtain that 
                $\Pos{s_{r-1}}{y} = \Pos{s_{0}}{y}$.
            Hence, 
                $\{k\} = \Pos{s_{0}}{y} \cap \Pos{s_{0}}{x_\nu}$.
            This implies that 
                $y=x_\nu$ 
            and due to $x_\nu\in\{x_1,\ldots,x_{r-1}\}$ it holds that 
                $y\in\{x_1,\ldots,x_{r-1}\}$. 
            But this is a contradiction to 
                $y\in\NR_{r-1}=\vars(\query_0)\setminus\{x_1,\ldots,x_{r-1}\}$.
            Thus, we have shown that $y\in\AV_{r-1}$.
            \par 
            Consider the query 
                $\query_y\deff\replace{\query_{r-1}}{x_r}{y}$,
            and let $s_y$ be the query string of $\query_y$. It holds that 
                $\query_y \Hom \query'$
            since we already know that 
                $\query_{r-1} \Hom \query'$.
            I.e., there is a homomorphism 
                $h: (\Vars \cup \Pfinplus(\TS))
                    \to 
                    (\Vars\cup \Pfinplus(\TS))$
            from $\query_{r-1}$ to $\query'$. This $h$ also is a homomorphism
            from $\query_y$ to $\query'$. To see this, note that for 
                $\{p_r\}\in\Pos{s_{r-1}}{x_r}$
            we have 
                $h(s_y[p_r]) = h(y) = h(s_{r-1}[k]) = s'[k] = s'[p_r]$;
            and for every other position $j\in[\ell]\setminus\Pos{s_{r-1}}{x_r}$
            we have
                $h(s_y[j]) = h(s_{r-1}[j]) = s'[j]$.
            Therefore, $\Supp(\query_y,\Sample)\geq\supp$ which contradits 
            \ref{eq:algo-sim-notype}.
            This ends the proof of Claim~\ref{claim:algo-noChangeOp}.\let\qed\relax
        \end{proof}
        To complete the proof of Claim~\ref{claim:SimInduction} we now 
        distinguish between the three cases depending on whether the query 
        $\query_r$ is obtained from $\query_{r-1}$ by perfomring a \TypeRepl,
        a \VarRepl, or a \NoChange. Our aim to show that in all cases the 
        conditions \eqref{item:sim-todo-i} and \eqref{item:sim-todo-ii} are 
        satisfied. Let us briefly recall these conditions:
        \begin{enumerate}[i]
			\item For all $j\in I_{r-1}$:
            \ If $s'[j]=s'[p_r]$ then $s_r[j]=s_r[p_r]$.\\
			\item
            If $\InPotTS{s'[p_r]}$ then 
            $s_r[p_r]\subseteq s'[p_r]$.
		\end{enumerate}
        \emph{Case 1:} $\query_r$ is obtained from $\query_{r-1}$ by replacing 
        the current variable $x_r$ in $s_{r-1}$ by a typeset $\dc\in\TSalt$, 
        i.e. $s_r = \replace{s_{r-1}}{x_r}{\dc}$. 
        Let $p_r$ the position of $x_r$ in $s_{r-1}$.
        By \eqref{item:sim-known-ii} we have
            $\varnothing\subset s'[p_r] \subseteq s[p_r] = \dc$.
        Since $\TSalt$ is walked through incrementally in 
        line~\ref{algo:forDeltaiLoopLine} $\dc$ is minimal in the following 
        sense: for each $\dc'\in\TSalt$ with $\dc'\subset\dc$ it holds that 
            $\Supp(\replace{\query_{r-1}}{x_r}{\dc'})<\supp$.
        Hence,
            $s[p_r] = \dc \subseteq s'[p_r]$
        holds as well, since otherwise $s'[p_r] = \dc'\subset\dc$ with 
        contradicts $\Supp(\query_r,\Sample)\geq \supp$.
        In particular \eqref{item:sim-todo-ii} is satisfied.
        To see that \eqref{item:sim-todo-i} is satisfied, consider 
            $\{p_r\}\in\Pos{s_0}{x_r}$
        and a 
            $j\in I_{r-1}$
        with 
            $s'[p_r] = s'[j] = \dc$.
        From $j\in I_{r-1}$ and $\query_r \sim_{I_{r-1}} \query'$ we obtain that 
            $s_r[p_r] = \dc = s_r[j]$.
        Hence, \eqref{item:sim-todo-i} is satisfied.
        \par 
        \emph{Case 2:} $\query_r$ is obtained from $\query_{r-1}$ by replacing 
        the variable $x_r$ in $s_{r-1}$ by an available variable 
        $y\in\AV_{r-1}$, i.e. the query string of $\query_r$ is 
            $s_r = \replace{s_{r-1}}{x_r}{y}$.
        According to Claim~\ref{claim:algo-invariants}\eqref{item:algo-invariants:NRandAV}
        there exists an 
            $r'\leq r-1$
        such that 
            $y=x_{r'}$.
        Furthermore, by definition of the algorithm, a variable can only be 
        included into the set $\AV$ in case of a \NoChange, i.e. neither a 
        \TypeRepl{} nor a \VarRepl{} was possible.
        Therefore, in the $r'$-th iteration of the algorithm's main loop, i.e. 
        the outer while-loop, the variable $x_{r'}$ was included into the set
        $\AV$. But this means that the conditions \ref{eq:algo-sim-vars} and 
        \ref{eq:algo-sim-notype} are satisfied. 
        Let $\{p_{r'}\}\in\Pos{s_0}{x_{r'}}$. The first statement of 
        Claim~\ref{claim:algo-noChangeOp} tells us that 
            $s'[p_{r'}]\in\Vars$.
        Note that 
            $p_{r'} \in I_{r'} \subseteq I_{r''}$
        for all $r''\geq r'$. 
        Hence, by Claim~\ref{claim:algo-invariants}\eqref{item:algo-invariants:PositionsNotInNRremainUnchanged}
        we obtain that 
            $y = x_{r'} = s_{r'}[p_{r'}] = s_{r''}[p_{r'}]$
        for all $r''\geq r'$.
        In particular, for $r''=r$ we obtain that 
            $y = s_r[p_{r'}]$.
        Hence, we have 
            $s_r[p_{r'}] = y = s_r[p_r]$
        for $\{p_r\}\in\Pos{s_0}{x_r}$.
        From $\query_r\Hom\query'$ we obtain that 
            $s'[p_{r'}] = s'[p_r]$.
        Since $s'[p_{r'}]\in\Vars$ we obtain that $s'[p_r]\in\Vars$. This 
        proves that condition \eqref{item:sim-todo-ii} is satisfied.
        \par 
        Let us now turn to condition \eqref{item:sim-todo-i}. Let 
        $\{p_r\}\in\Pos{s_0}{x_r}$ and choose an arbitrary $j\in I_{r-1}$ such 
        that 
            $s'[j] = s'[p_r]$.
        We want to prove that $s_r[j] = s_r[p_r]$.
        As shown above, 
            $s'[j] = s'[p_r] = s'[p_{r'}]$.
        From $j,p_{r'}\in I_{r-1}$ and $\query_r\sim_{I_{r-1}}\query'$ we obtain
        that 
            $s_r[j] = s_r[p_{r'}]$. 
        And we already know that 
            $s_r[p_{r'}] = y = s_r[p_r]$.
        This proves condition \eqref{item:sim-todo-i}.

        \par 
        \emph{Case 3:} $\query_r$ is obtained from $\query_{r-1}$ by perfomring 
        no replacement operation at all. In this case we know that the 
        statements \ref{eq:algo-sim-vars} and \ref{eq:algo-sim-notype} are 
        satisfied. From Claim~\ref{claim:algo-noChangeOp} we obtain for 
            $\{p_r\} \in \Pos{s_0}{x_r}$
        that $s'[p_r]\in\Vars$ and 
            $s'[j] \neq s'[p_r]$
        for all $j\in I_{r-1}$.
        Hence, both \eqref{item:sim-todo-i} and \eqref{item:sim-todo-ii} are 
        trivially satisifed.
        \par 
        In all three cases we have shown that \eqref{item:sim-todo-i} and 
        \eqref{item:sim-todo-ii} are satisfied, and thus we have 
            $\query_r \sim_{I_r} \query'$.
        This completes the proof of Claim~\ref{claim:SimInduction}. \let\qed\relax
    \end{proof} 
    In summary the proof of Proposition~\ref{prop:ComputeDescQueryFromq-dswg} is
    now completed. 
\end{proof}

\end{document}